\theoremstyle{definition}
\newtheorem{definition}{Definition}
\newtheorem{theorem}{Theorem}
\newtheorem{lemma}{Lemma}[theorem]
\newcommand{\appsection}[1]{\section{\MakeUppercase{#1}}}
\def\bibsection{%
   \par
   \begingroup
    \baselineskip26\p@
    \bib@device{\hsize}{72\p@}%
   \endgroup
   \nobreak\@nobreaktrue
   \addvspace{19\p@}%
  }%
\begin{document}

\preprint{ }

\title{Automated detection of symmetry-protected subspaces in quantum simulations}
\author{Caleb Rotello}
        \email{caleb.rotello@nrel.gov}
        \affiliation{Department of Physics and Quantum Engineering Program, Colorado School of Mines, Golden CO, 80401}
        \affiliation{National Renewable Energy Laboratory, Golden CO, 80401}
\author{Eric B. Jones}
        \email{eric.jones@coldquanta.com}
        \email{eric.jones@infleqtion.com}
        \affiliation{Infleqtion, Louisville CO, 80027}
        \affiliation{National Renewable Energy Laboratory, Golden CO, 80401}
\author{Peter Graf}
        \email{Peter.Graf@nrel.gov}
        \affiliation{National Renewable Energy Laboratory, Golden CO, 80401}
\author{Eliot Kapit}
        \email{ekapit@mines.edu}
        \affiliation{Department of Physics and Quantum Engineering Program, Colorado School of Mines, Golden CO, 80401}

\date{\today}

\begin{abstract}
The analysis of symmetry in quantum systems is of utmost theoretical importance, useful in a variety of applications and experimental settings, and is difficult to accomplish in general. Symmetries imply conservation laws, which partition Hilbert space into invariant subspaces of the time-evolution operator, each of which is demarcated according to its conserved quantity. We show that, starting from a chosen basis, any invariant, symmetry-protected subspaces which are diagonal in that basis are discoverable using transitive closure on graphs representing state-to-state transitions under $k$-local unitary operations. Importantly, the discovery of these subspaces relies neither upon the explicit identification of a symmetry operator or its eigenvalues nor upon the construction of matrices of the full Hilbert space dimension. We introduce two classical algorithms, which efficiently compute and elucidate features of these subspaces. The first algorithm explores the entire symmetry-protected subspace of an initial state in time complexity linear to the size of the subspace by closing local basis state-to-basis state transitions. The second algorithm determines, with bounded error, if a given measurement outcome of a dynamically-generated state is within the symmetry-protected subspace of the state in which the dynamical system is initialized. We demonstrate the applicability of these algorithms by performing post-selection on data generated from emulated noisy quantum simulations of three different dynamical systems: the Heisenberg-XXX model and the $T_6$ and $F_4$ quantum cellular automata. Due to their efficient computability and indifference to identifying the underlying symmetry, these algorithms lend themselves to the post-selection of quantum computer data, optimized classical simulation of quantum systems, and the discovery of previously hidden symmetries in quantum mechanical systems.
        
\end{abstract}

\maketitle

\section{Introduction}\label{sec:introduction}

The analysis of symmetry is a central tool in physics and has enabled some of the most profound discoveries in the field. Noether's Theorem famously connects the symmetries of a system's action with conservation laws to which that system's equations of motion are subject~\cite{Noether1918}. Generally, the analysis of symmetry, or the breaking thereof, allows one to constrain theories~\cite{gaillard1999standard}, solve equations of motion more efficiently~\cite{kozlov1983integrability}, and identify phases of matter~\cite{Landau:1937obd}. Applications of symmetry analysis in quantum information include, but are not limited to: quantum error correction~\cite{Fowler_2012}, error mitigation on quantum hardware~\cite{PhysRevA.98.062339, PhysRevA.100.010302, PhysRevA.95.042308, Cai_2021, McClean2020}, and quantum machine learning model design~\cite{PRXQuantum.3.030341}.

Quantum computing can efficiently simulate quantum dynamics in regimes where classical simulation becomes impossible~\cite{Daley2022-vl}. However, current quantum processors operate in a regime severely constrained by noise, with error rates not yet sufficiently below most error correction thresholds~\cite{acharya2022suppressing}. Error mitigation will therefore be
critical in the interim before fault-tolerant architectures can be scaled~\cite{cai2022quantum, quek2022exponentially, doi:10.1021/acs.chemrev.8b00803}. In recent work, and despite its limitations~\cite{Takagi_2022}, the technique of
\textit{post-selection} has proven useful to mitigate errors and extract useful results from quantum simulation experiments (see e.g.,~\cite{arute2020observation, doi:10.1126/science.abb9811, Jones2022}). 
Post-selection works by identifying measured states that could only have come from error processes and excluding them from the statistics used to calculate
output quantities. The most obvious example is a conserved quantity such as particle number.  In such a case, any measured state that does not preserve the conserved quantity must be
the result of errors. Due to the connection established by Noether, a more fundamental way to describe post-selection is with respect to symmetry.  
According to this description,
post-selection works by checking the eigenvalue of the simulation's ``fiducial" (i.e., initial) state under the symmetry operator 
against the corresponding symmetry operator eigenvalues (e.g., value of a conserved quantity) of individual measurement results in the dynamically-generated output state of the simulation. If a particular measurement outcome registers a different eigenvalue under the symmetry operator than the fiducial state does, then the measurement is a result of error and can be discarded. This procedure is restricted in the scope of its application, as symmetries of a quantum system and their corresponding operators 
(i.e., conserved quantities) are typically either engineered into the dynamics ``by hand'' or identified by clever theoretical intuition. For a generic quantum system, the relevant symmetry operator(s) may not be obvious a priori and may be difficult to identify. 
Being able to perform post-selection in a manner that does not require explicit identification of a symmetry operator would greatly increase the technique's applicability; this is the subject of this paper.  

As a corollary, such an operator-free method for error detection also enables additional applications such as more efficient classical simulation of quantum systems via computational basis state reduction. For example, particle number conservation in hardcore boson models can be used both for post-selection in quantum simulation and for reducing the basis state set size from $2^n$ to $\binom{n}{N}$, where $n$ is the number of lattice sites and $N$ the number of particles, in classical simulations~\cite{hebert2001quantum}. Interestingly, the identification of symmetry or conserved quantities in some instances can make classical simulation so efficient that it can obviate the need for quantum computation altogether~\cite{anschuetz2022efficient}. Finally, in certain special cases, one may be able to infer the explicit form of a symmetry operator by inspection of the reduced basis set.
 
In this paper, we provide algorithms to efficiently make use of symmetry in an operator-free manner. To do this, our methods create the subspace of measurement basis states which would share a conserved quantity of some commuting symmetry operator that is diagonal in that measurement basis, without needing to explicitly create that operator.  We call such a space a \textit{symmetry-protected subspace} (SPS).  In the language of linear algebra, these are \textit{invariant subspaces} of the evolution operator; they are subspaces,
determined by the initial state, from which the evolution cannot escape.  To reap the benefits of symmetry we only need to find the SPS of the initial state, not a conserved quantity, much less an explicit symmetry.  However, naively, to find an SPS we need to actually evolve the system in the full Hilbert space, which is 
exponentially large in the number of qubits (particles, spins, etc.).  Sections~\ref{sec:graphtheory} through~\ref{sec:algorithm2} describe the formulation and algorithms by which we avoid this exponential scaling, but here we provide a non-technical overview. 

First, note that most Hamiltonians and resulting Unitary evolution operators are built from a number of \textit{local} operators.  For example, the Heisenberg-XXX 
model described below consists only of nearest-neighbor interactions.  So, at some level, we have an intuition that the dynamics, thus the SPSs, should be
derivable, like the Hamiltonian itself, from a combination of local operations, and that local operations are inherently less computationally expensive to work with.  This is indeed the case, as shown below.

Next, note that to say that a wavefunction is \textit{in} a symmetry-protected or invariant subspace is to say that it is and remains throughout dynamic evolution a linear combination of basis states in that subspace and that subspace alone.  And if we care only about finding the subspace, we do not need to keep track of the actual linear combination (i.e., both the basis vectors and their amplitudes) but only the basis vectors.  This ``binarization" of the evolution is critical, because it allows us to adopt a graph-theoretic framework that is vastly more efficient for finding and searching SPSs.

This also leads to an important restriction in our work; using our methods, we can only automate the discovery of symmetry which is diagonal in a chosen basis. We work in the computational $Z$ basis throughout this paper, though extensions to other bases are of course possible by rewriting the time evolution operators in the new basis and performing the same procedure we describe below. Our automated methods should thus be viewed as a tool to find symmetry-protected subspaces within a given basis (if, of course, they exist), and to potentially improve classical and noisy quantum simulations based on those discovered subspaces. However, they still require an intelligently guessed initial basis as a starting point. For the problems we consider in this work, the computational basis is sufficient to derive novel results, though more complex choices can be required in other cases.

With this caveat in mind, once the basis is chosen we create an undirected and unweighted graph, called the state interaction graph, which describes all possible state-to-state transitions over a single application of a unitary evolution operator. The transitive closure of this graph fragments the Hilbert space of the system, represented in a particular measurement basis, into a cluster graph, whose subgraphs are each a symmetry-protected subspace. 

Our main results are two classical algorithms that efficiently construct and work within these subspaces. The first, Algorithm~\ref{alg:makesps}, uses ``transitive closure" on local operations to explore and explicitly construct the full SPS of an initial state, which enables the partition of the Hilbert space into a set of disjoint SPSs. This algorithm scales linearly in both the number of local operators from which the global operator is constructed and the size of the SPS, which is a huge improvement over the exponential scaling of the naive ``full evolution" approach.  However, because the SPS itself can be exponentially large (albeit with an asymptotically smaller prefactor), the second algorithm,  Algorithm~\ref{alg:greedypath}, finds a \textit{path} of local operations through a set of SPS graphs from an initial to final (i.e., measured) state to determine if they lie within the \textit{same} SPS (and thus the final state is valid). This algorithm scales as the number of local operations raised to a small integer power (that can be tuned for accuracy/performance) times the length of the path, thus completely eliminating any exponential scaling.

This paper is structured as follows. In Sec.~\ref{sec:preliminaries}, we define symmetry-protected subspaces and the three quantum systems that will serve as our ``exemplars" throughout. In Sec.~\ref{sec:graphtheory} we outline our novel graph theoretical approach to quantum simulations. Section~\ref{sec:algorithm1} discusses the algorithm for computing an entire symmetry-protected subspace. Section~\ref{sec:algorithm2} provides a more efficient algorithm to verify if two states exist within the same symmetry-protected subspace. Finally, we demonstrate the effectiveness of symmetry-protected subspaces to mitigate error in quantum simulations on a classical emulator in Sec.~\ref{sec:psdemo}.

\section{Preliminaries}\label{sec:preliminaries}
\subsection{Symmetry-protected subspaces}\label{subsec:symmetryprotectedsubspace}
Consider a quantum system undergoing unitary evolution according to the operator $U(t)$ for a time $t$. The operator $U(t)$ can represent continuous time evolution, but also includes other cases, such as discrete time evolution. We say that $U(t)$ is invariant under the action of an operator $S$ if $[S,U(t)]=0$ for all times $t$. In this instance, $S$ is a symmetry operator. For a basis in which $S$ is diagonal, states can be labeled by their eigenvalues under the action of $S$: $S|s,b\rangle = s|s,b\rangle$, where $b$ is some other (set of) label(s), which could represent, for example, a computational basis state integer encoding or a many-body eigenvalue. Suppose now that we initialize the dynamics in a state of definite $s$: $|\psi^s_0\rangle = \sum_{b} \alpha^s_{b} |s,b\rangle$, and evolve under $U(t)$. Given the commutativity of the symmetry and evolution operators, the action of the symmetry operator on the output state is: $S[U(t)|\psi^s_0\rangle]=U(t)S|\psi^s_0\rangle=U(t)s|\psi^s_0\rangle=s[U(t)|\psi^s_0\rangle]$, indicating that the eigenvalue $s$ is conserved under the evolution for all time. Our methods identify states that would share an eigenvalue under $S$ without 
explicitly knowing $S$. To do this, we use the notion of a symmetry-protected subspace, also known as an invariant subspace, which we define below.

\begin{definition}[Symmetry-Protected Subspace]\label{def:sps}
        Let $\mathcal{H}_d$ be a Hilbert space of dimension $d$ spanned by an orthonormal set of basis vectors, $B(\mathcal{H}_d)=\{|b\rangle\}$ . A subspace $G \subseteq \mathcal{H}_d$, which is spanned by a subset of $B(\mathcal{H}_d)$, 
        is a symmetry-protected subspace of unitary operator $U(t)$ if and only if a projection onto $G$, $P_G = \sum_{b\in G} \ket{b}\bra{b}$, obeys the commutation relation 
        $[P_G, U(t)]=0$. 
\end{definition}

Note that while our definition emphasizes the connection to symmetries, an SPS, as defined above, is indeed an invariant subspace according to the usual definition~\cite{radjavi2003invariant}, which is simply that $\forall \ket{g} \in G, \: U(t) \ket{g} \in G$, because if $\ket{g} \in G$, 
\begin{equation}
U(t) \ket{g} = U(t) P_G \ket{g} = P_G [U(t) \ket{g}] \in G,
\label{eq:invsubspace}
\end{equation}
where we have used the commutativity of $P_G$ and $U(t)$ and the fact that by definition the result of applying $P_G$
to anything is in $G$ . Another consequence of Def.~\ref{def:sps} is that if a particular basis state is not in $G$, then the transition matrix element to that state from any state in $G$ is strictly zero.

\begin{lemma}\label{lem:thm1_connection}
Let $\ket{g}\in G$ be an arbitrary element of a symmetry-protected (i.e., invariant) subspace $G$ of an evolution operator $U(t)$ and let $\ket{b} \in B(\mathcal{H}_d)$ be a basis vector outside of $G$, $\ket{b} \notin G$. Then, $\bra{b}U(t)\ket{g}=0$ for any time $t$.
\end{lemma}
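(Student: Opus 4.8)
The plan is to reduce the claim to the two facts already in hand: the defining property of an SPS, namely that its orthogonal projector commutes with the evolution, $[P_G,U(t)]=0$, and the structural fact from Def.~\ref{def:sps} that $G$ is spanned by a \emph{subset} of the orthonormal basis $B(\mathcal{H}_d)$. The strategy is simply to insert a factor of $P_G$ in front of $\ket{g}$, commute it across $U(t)$, and let it annihilate $\bra{b}$.

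First I would record the two elementary facts about the projector. Since $\ket{g}\in G$ and $P_G$ is the orthogonal projection onto $G$, we have $P_G\ket{g}=\ket{g}$. Since $G$ is spanned by a subset of $B(\mathcal{H}_d)$ and $\ket{b}$ is a basis vector with $\ket{b}\notin G$, the vector $\ket{b}$ is orthogonal to every basis vector spanning $G$; hence $\bra{b}P_G=\sum_{c\in G}\braket{b}{c}\bra{c}=0$. This orthogonality step---observing that, for a \emph{basis} vector, ``not in $G$'' genuinely means orthogonal to $G$, which hinges on $G$ being spanned by basis vectors rather than by an arbitrary set---is the only conceptually load-bearing point, and I expect it to be the sole place where the specific structure of Def.~\ref{def:sps} is essential. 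Everything else is bookkeeping.

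With these in hand the computation is immediate: writing $\bra{b}U(t)\ket{g}=\bra{b}U(t)P_G\ket{g}$ using $P_G\ket{g}=\ket{g}$, then invoking $[P_G,U(t)]=0$ to get $\bra{b}U(t)\ket{g}=\bra{b}P_G U(t)\ket{g}$, and finally applying $\bra{b}P_G=0$, yields $\bra{b}U(t)\ket{g}=0$ for all $t$. Equivalently, one could instead first invoke Eq.~\eqref{eq:invsubspace} to assert $U(t)\ket{g}\in G$, so that $P_G U(t)\ket{g}=U(t)\ket{g}$, and then pair this with $\bra{b}P_G=0$; the two routes are identical up to the order in which the commutation is used. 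Since no property of $t$ beyond the commutation relation is invoked, the result holds for any time $t$, completing the argument.
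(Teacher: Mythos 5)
Your proof is correct and follows essentially the same route as the paper's: insert $P_G$ via $P_G\ket{g}=\ket{g}$, commute it through $U(t)$ using $[P_G,U(t)]=0$, and annihilate it against $\bra{b}$. Your explicit justification that $\bra{b}P_G=0$ hinges on $G$ being spanned by a subset of the orthonormal basis is a slightly more careful articulation of the step the paper summarizes as ``the projection operator onto $G$ annihilates states outside of $G$.''
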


\begin{proof}
Using Def.~\ref{def:sps}, $\bra{b}U(t)\ket{g}=\bra{b}U(t)P_G\ket{g}=\bra{b}P_GU(t)\ket{g}=[P_G\ket{b}]^{\dagger}U(t)\ket{g}=0$, since the projection operator onto $G$ annihilates states outside of $G$.
\end{proof}

In Sec.~\ref{sec:algorithm1} we present an algorithm that prescriptively constructs a subspace, denoted $G_{\ket{\psi_0}}$, of a particular initial state, $\ket{\psi_0}$. We also show in the corresponding theorem, Thm.~\ref{thm:spstheorem}, that subspaces so constructed share the property described in Lem.~\ref{lem:thm1_connection}, which relies on Def.~\ref{def:sps} for its proof. Hence, $G_{\ket{\psi_0}}$ constructed according to the procedure in Sec.~\ref{sec:algorithm1} are symmetry-protected subspaces.

Similar to that of the symmetry operator $S$, the commutation relation involving $P_G$ also identifies a dynamical invariance, since for any $|g\rangle \in G$, $P_G[U(t)|g\rangle] = [U(t)|g\rangle]$. That is, the state $U(t)|g\rangle$ is an eigenvector of the projector with eigenvalue $1$. Therefore, while the projector $P_G$ does not identify conserved symmetry eigenvalues, it does indicate when such an eigenvalue exists. In Sec.~\ref{sec:graphtheory}, we will show how discovering $P_G$, rather than $S$ directly, empowers our algorithms to discover underlying invariances.

\subsection{Quantum simulations}\label{subsec:quantumsimulation}

The general form for unitary evolution operators we assume is

\begin{equation}\label{eqn:timeevol}
U(t) = \mathcal{O}_{op} \Bigg[ \prod_{j=1}^p \prod_{i=1}^m U_i(\tau_j)\Bigg],
\end{equation}
where $t$ is the total duration of evolution, $\mathcal{O}_{op}$ denotes some operator ordering, such as time ordering, and $p$ is the number of time-steps used to evolve to $t$. Each $U_i(\tau_j)$ acts locally on $k$-qubits and is parameterized by a real time coordinate $\tau_j$. Evolution unitaries of the form in Eq.~\eqref{eqn:timeevol} can evolve over discrete time, where each $\tau_j$ is finite, or discretized-continuous time, where each $\tau_j$ is (ideally infinitessimally) small in order to minimize Trotter error.

In the instance where a $k$-local Hamiltonian, $H = \sum_i h_i$, is known, Eq.~\eqref{eqn:timeevol} results from the local dynamics governed by the $h_i$ via the Trotter-Suzuki formula~\cite{Suzuki1991}, and $p$ corresponds to the number of Trotter steps. We refer to the operator $U$ as the relevant quantum system and examine the dynamics and associated symmetry-protected subspaces of three exemplary systems.

\subsubsection{Heisenberg-XXX}\label{subsubsec:heisenbergXXX}
The one-dimensional Heisenberg-XXX model for $n$ spin-$1/2$ particles with nearest-neighbor interactions is given by the following Hamiltonian:
\begin{equation}\label{eq:xxx_ham}
        H = \sum_{i=0}^{n-2} X_{i}X_{i+1} + Y_{i}Y_{i+1} + Z_{i}Z_{i+1},
\end{equation}
where $X_i, Y_i,$ and $Z_i$ are Pauli operator acting on spin $i$. The model conserves total spin in the $Z$ basis, represented by the operator
\begin{equation}
        S^z = \sum_{i=0}^{n-1} Z_{i},
\end{equation}
as well as the correspondingly-defined operators $S^x$ and $S^y$. Quantum simulation of the Heisenberg model on a digital quantum processor can be achieved via exponentiation and Trotterization of Eq.~\eqref{eq:xxx_ham}. In such quantum simulation experiments, one usually picks one or a number of qubit bases in which to measure. The symmetry operators $S^{x, y, z}$ can be used to mitigate errors in the all-qubit $X,Y,Z$ measurement bases, respectively, via post-selection. In the context of classical simulation of the XXX model, the symmetries can be used to constrain the number of basis states included in the dynamics.

\subsubsection{$T_6$ quantum cellular automata}\label{subsubsec:t6qca}
The one-dimensional $T_6$ quantum cellular automata (QCA) rule has recently come to interest within the context of quantum complexity science as a dynamical small-world mutual information network generator~\cite{Jones2022} and a QCA Goldilocks rule~\cite{Hillberry2021}. Its discrete-time unitary update can be derived from a parent Hamiltonian, but it is more natural to define the system by specifying the simulation unitary for a discrete time $t=p$, directly:
\begin{equation}
\begin{split}
U(T_6;t)&=\mathcal{O}_{op}^{time} \Bigg[\prod_{j=1}^{p} \prod_{i=3, 5, \ldots}^{n-1} U_i(\tau_j) \prod_{i=2,4, \ldots}^{n-1}U_i(\tau_j)\Bigg]\\
U_i(\tau_j) &= \sum_{\alpha,\beta=0}^1 P_{i-1}^{(\alpha)} \otimes (H_i)^{\delta_{\alpha+\beta,1}} \otimes P_{i+1}^{(\beta)},
\end{split}
\end{equation}
where $P^{(\alpha)}_i = \ket{\alpha_i}\bra{\alpha_i}$ for $\alpha=0,1$ is the projection operator onto the corresponding state of qubit $i$, $H_i$ is the Hadamard operator, and $\delta_{\alpha+\beta,1}$ is the Kronecker delta function. At each time step, a Hadamard is applied to a qubit only if exactly one of its neighbors is in the $\ket{1}$ state (i.e., $\alpha + \beta = 1$) and does nothing otherwise. It has a known $Z$ basis symmetry related to domain-wall conservation:
\begin{equation} 
        S = \sum_{i=0}^{n} Z_{i}Z_{i+1},
\end{equation} 
 where it should be understood that indices $i=1, \ldots, n$ refer to dynamical, computational qubits while the indices $i=0$ and $i=n+1$ refer to non-dynamical qubits fixed to the $\ket{0}$ state.

\subsubsection{$F_4$ quantum cellular automata}\label{subsubsec:f4qca}

The one-dimensional $F_4$ QCA with nearest- and next-nearest-neighbor connectivity is another Goldilocks rule~\cite{Hillberry2021}. It is also most easily specified by its simulation unitary for discrete-time duration $t=p$:
\begin{equation}
\begin{split}
U(F_4;t) = \mathcal{O}_{op}^{time} \Bigg[ \prod_{j=1}^p U(\tau_j) \Bigg],
\end{split}
\end{equation}
where if the time step index, $j$, is even
\begin{equation}
    U(\tau_{j=\text{even}}) = \prod_{i = 2, 5, 8, \ldots} U_i(\tau_j) \prod_{i=3,6,9,\ldots} U_i(\tau_j) \prod_{i=4,7,10,\ldots} U_i(\tau_j)
\end{equation}
and if $j$ is odd, then
\begin{equation}
    U(\tau_{j=\text{odd}}) = \prod_{i = 3,6,9, \ldots} U_i(\tau_j) \prod_{i=2,5,8\ldots} U_i(\tau_j) \prod_{i=4,7,10,\ldots} U_i(\tau_j).
\end{equation}
In either case,
\begin{equation}\label{eqn:f4_activator}
U_i(\tau_j) = \sum_{\alpha, \beta, \gamma, \omega = 0}^1 P_{i-2}^\alpha P_{i-1}^\beta (H_i)^{\delta_{\alpha+\beta+\gamma+\omega,2}} P_{i+1}^\gamma P_{i+2}^{\omega}.
\end{equation}
Equation~\eqref{eqn:f4_activator} applies a Hadamard to a qubit if exactly two out of its neighbors or next-nearest neighbors are in the $\ket{1}$ state. There are no analytically known symmetries for this rule. As shown in Fig.~\ref{fig:scaling} in Sec.~\ref{subsec:complexityscaling}, our methods discover previously-unknown, symmetry-protected subspaces, indicating a hitherto hidden symmetry of the system.

\begin{figure*}
\centering
\includegraphics[width=\linewidth]{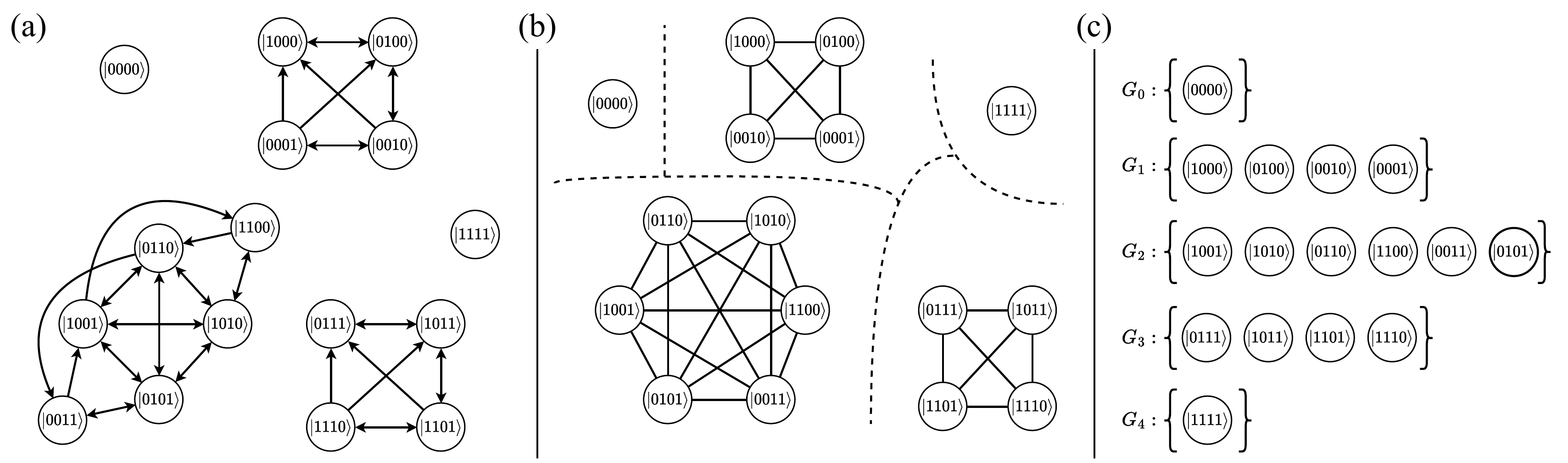}
\caption{\label{fig:workflow}Example construction of the state interaction graph $D_{U_{\text{Trot}}}$ and corresponding symmetry-protected subspaces for the one-dimensional, four-site hopping unitary in Eq.~\eqref{eq:hop_unitary}. Multi-qubit states are ordered as $\ket{q_0 q_1 q_2 q_3}$. One Trotter approximation for the unitary, $U_{\text{Trot}}$, is given in Eq.~\eqref{eq:trot_unitary}, where it should be understood that right-most operators act first. For graphical clarity we omit loops, with the understanding they are always implied. (a) State interaction graph $D_{U_{\text{Trot}}}$. Nodes represent the different 4-bit strings, and edges occur where matrix elements of $U_{\text{Trot}}$ are nonzero. Note the presence of five disconnected subgraphs. In practice, we treat each of these edges as undirected. (b) Transitive closure of $D_{U_{\text{Trot}}}$ resulting in the closed state interaction graph, which in this case corresponds exactly to the state interaction graph for the hopping unitary: $D_{U_{\text{Hop}}} = D^+_{U_{\text{Trot}}}$. Each disconnected subgraph in $D_{U_{\text{Trot}}}$ has become a disconnected complete graph in $D^+_{U_{\text{Trot}}}$. (c) Each complete graph in $D_{U_{\text{Hop}}} = D^+_{U_{\text{Trot}}}$ corresponds to a symmetry-protected subspace $G_s$. In this example, the subspace indices correspond to eigenvalues under particle number, $S=\sum_{i=0}^3 a^{\dagger}_i a_i$, conservation.
 }
\end{figure*}

\section{Graph Theory Approach to Quantum Simulations}\label{sec:graphtheory}
In this section, we show how graph theory coupled with transitive closure discovers symmetry-protected subspaces. We describe how to ``binarize'' interactions between basis states through the dynamical system by discarding amplitude and phases to solely highlight \textit{if} such an interaction exists, describing these interactions as transitive relations, and finding the states connected by a transitive relation to form the SPSs of the system. 

Directly, this method still requires unitary matrix multiplication in Hilbert space to establish basis state interactions through the dynamical system. Thus, to apply our methods, we need an efficient way to describe state-to-state interactions. We do this by creating a structure we call a \textit{string edit map} $\mathcal{L}_U$ for a unitary operator $U$, which relies on the observation that quantum systems are typically structured by local interactions. This string edit map returns the basis states available through any operations of a single local unitary operator on a single basis state in near-constant time complexity, allowing us to inexpensively find basis vectors available in local quantum dynamics.


This section will proceed as follows: in Sec.~\ref{subsec:stateinteractiongraph}, we first define the concept of a state interaction graph, whose edges indicate nonzero amplitudes on transitions between measurement basis states in the quantum simulation. Next, in Sec.~\ref{subsec:transitiveclosure} we show how transitive closure on this graph creates a cluster graph, which we call the closed state interaction graph, whose complete subgraphs are symmetry-protected subspaces. Finally, in Sec.~\ref{subsec:stringeditoperator} we show how this closed interaction graph can efficiently return the set of all measurement basis states seen by a unitary operator on a single state via the construction of the string edit map $\mathcal{L}_U$ for a unitary operator $U$.

\subsection{State interaction graph}\label{subsec:stateinteractiongraph}

We begin by defining the state interaction graph $D_U$ for a unitary operator $U$ over a Hilbert space basis $B(\mathcal{H}_d)$ and showing how to construct it. For now we leave the form of $U$ general and will specify particular forms when necessary.
\begin{definition}[State Interaction Graph]\label{def:stateinteractiongraph}
        Given a basis $B$ and a unitary operator $U$, define a vertex set $V \equiv B$ and an undirected edge set $E \equiv \{ (\ket{b}\leftrightarrow\ket{b'}) \: \: \forall \: \: \ket{b}, \ket{b'} \in B \: \: | \: \: \bra{b'}U\ket{b}\neq 0\}$. Then, the state interaction graph is defined by the ordered tuple $D_U \equiv (V, E)$. In other words, the basis states of $B$ are assigned to vertices (nodes) in the interaction graph and edges are created between vertex states only where the matrix element of the evolution unitary between the two states is nonzero.
\end{definition}
Strictly speaking, $D_U$ should be a directed graph, where an edge points from $\ket{b}$ to $\ket{b'}$ if $\bra{b'}U\ket{b}\neq 0$ and from $\ket{b'}$ to $\ket{b}$ if $\bra{b}U\ket{b'}\neq 0$. However, for a symmetry operator $S$ and time-dependent simulation unitary $U(t)$, one can show that $[S,U(t)]=0 \iff [S,U^{\dagger}(t)]=0$, meaning that symmetries of the evolution operator, and their associated protected subspaces, are invariant under time-reversal. To remain consistent with this observation, we treat every directed relationship $\bra{b'}U\ket{b}\neq 0 \Rightarrow (\ket{b}\rightarrow\ket{b'})$ as an undirected edge $(\ket{b}\leftrightarrow \ket{b'})$. This treatment is equivalent to assuming that the true, directed state interaction graph corresponding to Def.~\ref{def:stateinteractiongraph} always has a cycle that leads back to every node, such as is the case in Fig.~\ref{fig:workflow}(a). Formally, we assume that if an edge $(\ket{b}\rightarrow\ket{b'})\in D_U$, there exists a path $\{\ket{b'}\rightarrow\dots\rightarrow\ket{b}\}\subseteq D_U$. This global cyclicity assumption enables us to use the notion of transitive closure in Sec.~\ref{subsec:transitiveclosure}, and subsequently, in an uncomplicated manner that respects the time-reversal invariance of the resulting subspaces. It is justified on two points: 1.) Most simulation unitaries have a very regular, repetitive structure so that directed, acyclic state interaction graphs are likely only to arise in extremely pathological instances, and 2.) The failure of the global cyclicity assumption will only ever result in the artificial enlargement of a symmetry-protected subspace, and while such a failure leads to underconstrained subspaces, which is bad for the efficacy of e.g., post-selection, it will never result in the corruption of simulation fidelity by overconstraining or throwing out good simulation data.

To construct $D_U$, we use the following steps. First, choose a set of Hilbert space basis vectors $B(\mathcal{H}_d)=\{\ket{b}\}$. Any symmetry-protected subspaces that are discovered must be formed by the basis vectors of this basis. In the context of quantum simulation, $B$ dictates the basis in which a quantum computer will be measured. For example, parallel readout in the computational $Z$ basis will result in bit strings, $B_{\text{comp.}}=\{\ket{0\ldots00}, \ket{0\ldots01},\ldots,\ket{1\ldots11}\}$, which are Pauli $Z$-string eigenvectors. In the context of classical simulation, the basis furnishes a representation for the $d$-dimensional vector of complex amplitudes that stores the evolving many-body wavefunction. When the evolution operator is applied to a basis vector for a single time-step the resulting state $\ket{\psi(\tau)}$ has a basis vector decomposition 
\begin{equation}\label{eqn:basisvecdecomp}
    \ket{\psi(\tau)} = U(\tau) \ket{b} = \sum_{b'\in B} \ket{b'}\bra{b'}U(\tau) \ket{b} = \sum_{b'\in B}\alpha_{b'}(\tau) \ket{b'},
\end{equation}
where we suppress the time-ordering subscript $j$ in $\tau_j$ for simplicity. With this decomposition for each $\ket{b} \in B$, we create the state interaction graph $D_U$ of the operator $U(\tau)$: for each pair $\ket{b},\ket{b'}$ such that $\alpha_{b'}(\tau)\neq 0$ in Eq.~\eqref{eqn:basisvecdecomp}, we add an edge $(\ket{b}\leftrightarrow \ket{b'})$ to $D_U$. If one has a direct $d\times d$ matrix representation of $U(\tau)$ on-hand, then the adjacency matrix for $D_U$ can be read off directly as $\mathcal{A}_U =\text{bit}[U(\tau)] + \text{bit}[U^\dagger(\tau)]$, where if an entry in $U(\tau)$ or $U^\dagger(\tau)$ becomes non-zero for any value of $t$, its complex value is replaced by $1$ under the operation $\text{bit}[\ldots]$. This adjacency matrix can be formed, for example, by directly exponentiating a $d\times d$ Hamiltonian matrix with a time parameter. In practice, however, constructing or storing an entire evolution unitary in memory is costly, since the size of Hilbert space grows exponentially in the number of qubits, or spins, in the simulation: $d=2^n$. Indeed, one of the main advantages of digital quantum simulation is the ability to break global evolution unitaries into sequences of local unitaries, at the expense of introducing error, which are then implemented as quantum gates. Therefore, being able to extract symmetry-protected subspaces from consideration of local operations, rather than from the global unitary they may approximate, is of clear benefit.

Towards this end, Fig.~\ref{fig:workflow}(a) shows the state interaction graph $D_{U_{\text{Trot}}}$ for one, potentially very bad depending on $\theta$, Trotter approximation to the one-dimensional, four-qubit hopping unitary
\begin{equation}\label{eq:hop_unitary}
U_{\text{Hop}}(\theta) = e^{i \theta \sum_{i=0}^2 (X_i X_{i+1} + Y_i Y_{i+1})/2}.
\end{equation}
We take the Trotterization to be 
\begin{equation}\label{eq:trot_unitary}
U_{\text{Trot}}(\theta) = \text{iSWAP}_{01}(\theta) \times  \text{iSWAP}_{12}(\theta) \times \text{iSWAP}_{23}(\theta),
\end{equation}
where it should be understood that rightmost operators are applied first, and multi-qubit states are ordered as $\ket{q_0q_1q_2q_3}$. Notice that each $\text{iSWAP}_{i,i+1}(\theta)$ is parameterized by an arbitrary $\theta$, and as such we expect each iSWAP operation to be a fractional operation that leaves some residual state behind, i.e., the operator has an identity component. The system is small enough that the state interaction graph can be checked by hand in this case, and the main observation to be made is that it is comprised of four, disjoint subgraphs, each only containing transitions between states of fixed particle number (i.e., number of $\ket{1}$s), and that all nodes in each subgraph have a path to all other nodes in the subgraph. The zero- and four-particle states are isolated, while the one-, two-, and three-particle states form directed, incomplete, isolated subgraphs. As we will see, this ``incompleteness'' feature is a pathology of the Trotter approximation which will be rectified in Sec.~\ref{subsec:transitiveclosure} via transitive closure. It is also worth noting that constructing the state interaction graph using the Trotter-approximated unitary is not yet useful, since on a classical computer it currently still requires the storage and evolution of a $2^n$-dimensional wavefunction. We will demonstrate the utility of constructing state interaction graphs from component $k-$local operators in Sec.~\ref{subsec:stringeditoperator}.

\subsection{Defining symmetry-protected subspaces with transitive closure}\label{subsec:transitiveclosure}

Suppose we have states $\ket{b}$ and $\ket{b''}$, such that $\bra{b''}U\ket{b}=0$ and $\bra{b''}UU\ket{b}\neq 0$. This requires a transitive relation: $\bra{b''}UU\ket{b} = \sum_{b'\in B} \bra{b''}U\ket{b'}\bra{b'}U\ket{b}$, because $\ket{b}$ must first transition to an intermediate state $\ket{b'}$ to reach its final destination at $\ket{b''}$. Therefore, in our state interaction graph $D_U$, as defined in Def.~\ref{def:stateinteractiongraph}, there are edges $(\ket{b}\rightarrow \ket{b'}) ,(\ket{b'} \rightarrow \ket{b''}) \in D_U$, which will have the same transitive relation encoded in the path $\{\ket{b} \rightarrow \ket{b'} \rightarrow \ket{b''}\} \subseteq D_U $. We use this duality to make the assumption that if the edges $(\ket{b}\rightarrow \ket{b'}),(\ket{b'}\rightarrow \ket{b''}) \in D_U$, the transition amplitude $\bra{b''}UU\ket{b}\neq 0$. 

There are cases where the amplitudes of the states cancel, due to destructive interference, and break this transitive property on the level of individual basis state to basis state interactions. By ignoring the amplitudes of the basis states, we run the risk of including states in the subspace which would be removed via destructive interference. This risk comes with the benefit of efficiently knowing which states are reachable in the quantum simulation, and for our applications it does not \textit{add} any error to a simulation, as it does not break the underlying commuting subspace projection operator $P_G$; the subspaces are simply not as restrictive as they could be. See Appendix~\ref{appendix:proofdestint} for a proof. This also allows one to define the symmetry-protected subspaces to be for \textit{any} parameterization of the simulation unitary.

The transitive property exists for every state $\ket{b}$ in $D_U$, so we can take the transitive closure of the state interaction graph to create a \textit{closed} state interaction graph $D^{+}_U$. The transitive closure of an edgeset $E$ is a transitively closed edgeset $E^+$, where every pair of states $\ket{b},\ket{b'}\in E$ which can be associated by any transitive relation, in other words can be connected by a path $\{\ket{b}\rightarrow \dots \rightarrow \ket{b'}\} \subseteq E$, has an edge $(\ket{b}\rightarrow \ket{b'}) \in E^+$~\cite{lidl}.

\begin{definition}[Transitively Closed State Interaction Graph]\label{def:closedstateinteractiongraph}
        Let $D_U=(V,E)$ be a potentially non-closed state interaction graph for unitary $U$ and basis $B$. Define $V^+\equiv V$ to be the closed state interaction graph vertex (node) set and $E^+$ to be the state interaction graph edge set. An edge, $(\ket{b}\leftrightarrow\ket{b'})\in E^+$, exists in this edge set if and only if there is a path between $\ket{b}$ and $\ket{b'}$ in $D_U$, $\{\ket{b}\leftrightarrow\dots\leftrightarrow \ket{b'}\}\subseteq E$. The transitively closed state interaction graph is then defined as $D_U^+ \equiv (V^+, E^+)$.
\end{definition}

Because the original interaction graph represents single-operator state-to-state transitions, any two basis states which can discover each other through the quantum evolution have an edge in $D^{+}_U$; in other words,
\begin{equation}\label{eqn:unitaryinterpretationofgraph}
        \exists\:t\;s.t.\;\bra{b'}U(t)\ket{b}\neq 0 \Rightarrow (\ket{b}\leftrightarrow \ket{b'}) \in D^+_U,
\end{equation}
for some time/operator exponent $t$. The transitive closure of an undirected, unweighted graph is a cluster graph, or a set of complete subgraphs; as discussed at the end of Sec.~\ref{subsec:stateinteractiongraph}, we treat the graph $D_U$ as undirected just for this purpose. The transitive closure of the state interaction graph for the Trotterized unitary in Fig.~\ref{fig:workflow}(a) can be seen in Fig.~\ref{fig:workflow}(b), which in this case turns out to be state interaction graph for the original hopping unitary, or $D^+_{U_{\text{Trot}}}=D_{U_{\text{Hop}}}$. As with $D_{U_{\text{Trot}}}$, within each complete subgraph, total particle number is conserved.

If the edge $(\ket{b}\leftrightarrow \ket{b'}) \in D^{+}_U$, then $\ket{b}$ and $\ket{b'}$ share a conserved quantity of the underlying unknown symmetry. This cluster graph structure also makes it apparent that if a wavefunction is initialized as a linear combination of vectors in one subgraph $G$ of $D^{+}_U$: $\ket{\psi_0} = \sum_{b\in G} \alpha_b \ket{b}$, it will remain in that subgraph:
\begin{equation}
        \ket{\psi(t)}=U(t)\ket{\psi_0} = \sum_{b'\in G}\alpha_{b'}(t) \ket{b'}\; \forall \: t.
\end{equation}
Therefore, each complete subgraph $G$ within $D^{+}_U$ is a symmetry-protected subspace. We will formally prove that transitive closure on the state interaction graph can give an SPS with Thm.~\ref{thm:spstheorem} in Sec.~\ref{subsec:recurrence_relation}.

$D^{+}_U$ can be represented as a list of disjoint sets of nodes, where each set has implied all-to-all connectivity. This set construction can be seen for the hopping unitary in Fig.~\ref{fig:workflow}(c). Here, the index, $s$, of each subset, $G_s$ counts the number of conserved particles. Formally, we can define the associated symmetry-protected subspaces by constructing their projection operators according to Def.~\ref{def:sps}: $P_G = \sum_{b\in G} \ket{b}\bra{b}$.

\subsection{Basis state string edit map}\label{subsec:stringeditoperator}

We now have a method to identify symmetry-protected subspaces using the language of graph theory. However, actually computing these subspaces still requires the construction and manipulation of vectors and matrices in an exponentially large Hilbert space.   Recall, though, that the systems of interest are defined by Hamiltonians composed of local operations $U = \prod_i U_i $ where each $U_i$ is \textit{k-local}, meaning it only involves $k$ of the $n$ total qubits in the system, and where in general we will have $k \ll n$.  In this section we describe our mechanism for using this fact to build up SPSs efficiently with what we call the \textit{basis string edit map}.  This map enables computation of subspaces using only $k$-local operations on basis state vectors, so the computational complexity of operation with this map scales with $k$ instead of $n$.

\begin{definition}[Basis String Edit Map]\label{def:stringeditoperator}      
        Let $U_{i}(t)$  be a unitary operator that acts for a time $t$ non-trivially on $k$ of $n$ qubits, $Q_k(i) \subseteq \{q_0, \ldots, q_{n-1}\}$. The \textit{basis string edit map} $\mathcal{L}_{U_i}$ maps a basis state $\ket{b}$ to the set of basis states $\{\ket{b'}\}$ to which $\ket{b}$ can evolve after an arbitrary amount of time under $U_i(t)$. Formally, $\mathcal{L}_{U_i}(\ket{b}) = \{\ket{b'} \: | \:\exists\, t\, \bra{b'} U_i(t)\ket{b} \neq 0\}$.
\end{definition}

We can apply this construction to any unitary operator, including any $k$-local Trotter decomposition. Given a unitary operator which is a product of local operators $U = \prod_i U_i $, we form the set of local operators used in the Trotter decomposition, $\{U_i\}$. Then a basis string edit map can be formed for any subset of operators from $\{U_i\}$, as long as every operator $U_i \in U$ is included in at least one string edit map. We will use the decomposition 
\begin{equation}\label{eqn:setofeditmaps}
\mathbb{L} \equiv \{\mathcal{L}_{U_i} : U_i \in U \},
\end{equation}
which has one string edit map for each local operator in the Trotter decomposition.

In Def.~\ref{def:stringeditoperator} we have deliberately left out the exact space upon which $\mathcal{L}_{U_i}$ acts. When operating on states in a basis $B$ with $\mathcal{L}_{U_i}$, when $\text{dim}(\mathcal{L}_{U_i}) < \text{dim}(B)$, we will call $\mathcal{L}_{U_i}$ a ``substring edit map'',  and when $\text{dim}(\mathcal{L}_{U_i}) = \text{dim}(B)$, we refer to it as a ``string edit map''  or ``full string edit map".

For example, in the hopping unitary, $U_{\text{Hop}}$ given in Eq.~\eqref{eq:hop_unitary}, we can define $\mathcal{L}_{U_{\text{Hop}}}$ which would give $\mathcal{L}_{U_{\text{Hop}}}( \ket{0111}) = G_3$ where $G_3$ is in Fig.~\ref{fig:workflow}(c). More usefully, we can take the Trotterization of Eq.~\eqref{eq:hop_unitary} given by Eq.~\eqref{eq:trot_unitary}, and define string edit maps for the local iSWAP operators, $\mathbb{L}_{\text{Trot}} = \{\mathcal{L}_{\text{iSWAP}_{0,1}(\theta)}, \mathcal{L}_{\text{iSWAP}_{1,2}(\theta)}, \mathcal{L}_{\text{iSWAP}_{2,3}(\theta)}\}$. In Eq.~\eqref{eq:trot_unitary}, each local operator $U_i = \text{iSWAP}_{i,i+1}(\theta)$ would generate the corresponding edit map $\mathcal{L}_{\text{iSWAP}_{i,i+1}(\theta)}$, which operate as, e.g., $\mathcal{L}_{\text{iSWAP}_{1,2}(\theta)}(\ket{0\textbf{10}0})=\{\ket{0\textbf{01}0}, \ket{0\textbf{10}0}\}$. Here we have written the operators as acting on the full $2^n$-dimensional Hilbert space and highlighted in bold the qubits that are part of the $2^k$ dimensional subset of this space upon which $\mathcal{L}_{\text{iSWAP}_{1,2}(\theta)}$ acts. This k-local string edit map does not require information about any states besides those at the relevant indices, $1$ and $2$ in this case. 

Algorithm~\ref{alg:computeeditfunc} creates $\mathcal{L}_{U_i}$ by taking transitive closure of the adjacency matrix $\mathcal{A}_{U_i}$ of $D_{U_i}$ via boolean matrix multiplication~\cite{fischer1971}.  This algorithm requires  $O(2^{3k})$ time to compute $\mathcal{L}_{U_i}$, $O(2^k)$ space to store it, and $O(1)$ time to use $\mathcal{L}_{U_i}$. 
\begin{algorithm}[H]
\caption{Create a string edit map $\mathcal{L}_{U_i}$}\label{alg:computeeditfunc}
\begin{algorithmic}
        \Require{Local unitary $U_i$, orthonormal basis $B_i=\{\ket{b}\}$}
        \Ensure{$\text{Dim}(B_i)=\text{Dim}(U_i) = 2^k$ and $\text{Col}(U_i) = \text{Col}(B_i)$}
        \State $\mathcal{A}_{U_i} \gets \text{bit}[U_i] + \text{bit}[U_i^\dagger]$ \Comment{\parbox[t]{.5\linewidth}{Add $U_i^\dagger$ to make the adjacency matrix undirected}}
        \State $\mathcal{A}_{U_i}' \gets \text{bit}[\mathcal{A}_{U_i}^2]$
        \While{$\mathcal{A}_{U_i}' \neq \mathcal{A}_{U_i}$}
            \State $\mathcal{A}_{U_i} \gets \mathcal{A}_{U_i}'$
            \State $\mathcal{A}_{U_i}' \gets \text{bit}[\mathcal{A}_{U_i}^2]$
        \EndWhile
        
        \For{$\ket{b} \in B_i$}
            \For{$\ket{b'} \in B_i$}
                \If{$\mathcal{A}_{U_i}[\ket{b}, \ket{b'}] = 1$}
                    \State $\mathcal{L}_{U_i}(\ket{b}) \gets \mathcal{L}_{U_i}(\ket{b}) \cup \ket{b'}$
                \EndIf
            \EndFor
        \EndFor\\
        \Return{$\mathcal{L}_{U_i}$} 
\end{algorithmic}
\end{algorithm}
\noindent
Algorithm~\ref{alg:computeeditfunc} is computationally trivial to compute for small unitary operators ($k=2$ in the $\text{iSWAP}$ example), but very expensive for the large unitary operators encountered in quantum simulations. Throughout the rest of this paper, we will use $\mathcal{L}_{U_i}$ defined on small $k$ to compute our subspaces in order to keep a small overhead (thus, in our terminology, we will always be talking about ``substring edit maps"). 


The object $\mathbb{L}$ lets us create a version of the state interaction graph from Def.~\ref{def:stateinteractiongraph} which we call the \textit{string interaction graph} $D_\mathbb{L}$; this construction compactly shows the state-to-state interactions in the  algorithms presented in Sec.~\ref{sec:algorithm1} and~\ref{sec:algorithm2}. The mapping itself is defined as $\mathbb{L}(\ket{b}) \equiv \{\mathcal{L}_i(\ket{b}) \: : \: \mathcal{L}_i \in \mathbb{L}\}$.
\begin{definition}[String Interaction Graph]\label{def:basisstringinteractiongraph}
    Let $D_\mathbb{L}$ be a graph defined by an ordered tuple $D_\mathbb{L} \equiv (V_\mathbb{L},E_\mathbb{L})$ and $\mathbb{L} \equiv \{\mathcal{L}_{U_i} \: : \: U_i \in U\}$ for some Trotterized unitary $U=\prod U_i$. The vertex set is then given by $V_\mathbb{L} \equiv B(\mathcal{H}_d)$ and the edgeset is given by $E_\mathbb{L} \equiv \{(\ket{b}\leftrightarrow\ket{b'}) \: : \: \ket{b},\ket{b'} \in B \: \mathrm{and} \: \ket{b'} \in \mathbb{L}(\ket{b})\}$.
 \end{definition}
Notice that, while edges in $D_U$ can capture the action of multiple operators at once, each edge in $D_{\mathbb{L}}$ is the action of only a single operator; therefore, if $D_U\equiv(V,E)$ is the state interaction graph and $D_\mathbb{L}\equiv(V,E_\mathbb{L})$ is the string interaction graph, then $E_\mathbb{L} \subseteq E$. Later, we will show with the proof in Appendix~\ref{appendix:prooftc} for Thm.~\ref{thm:spstheorem} that despite this inequality, the transitive closure of the graph $D_\mathbb{L}^+$, is equivalent to the transitively closed state interaction graph $D_\mathbb{L}^+ \equiv D_U^+$.

\begin{figure*}
\centering
\includegraphics[width=\linewidth]{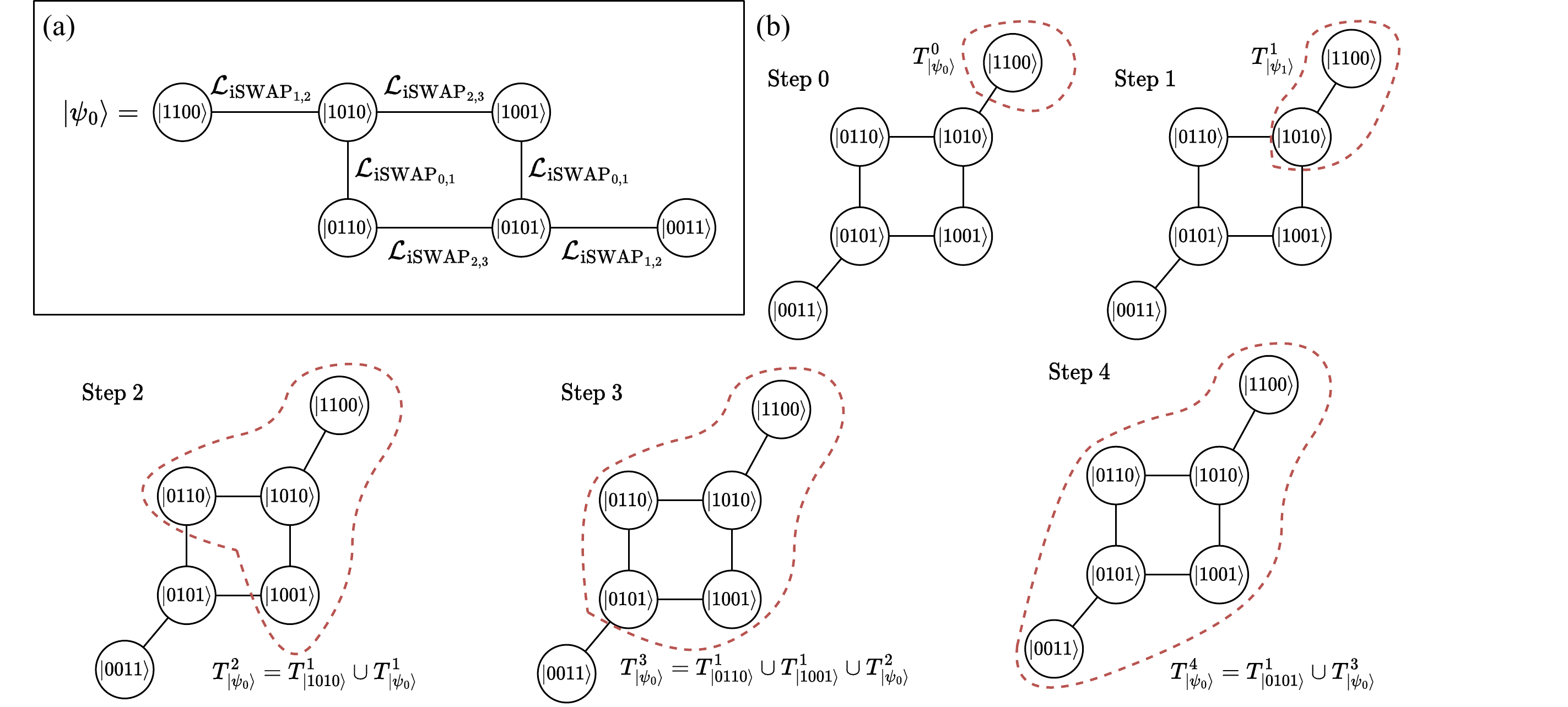}
\caption{\label{fig:recursiverelation} This figure shows how we can incrementally build the symmetry-protected subspace for $U=\prod_{i=0}^2\text{iSWAP}_{i,i+1}(\theta)$. (a) A single un-closed symmetry-protected subspace, or a subgraph of $D_\mathbb{L}$ for an iSWAP network. All operations $\mathbb{L}_{\text{Trot}}=\{\mathcal{L}_{\text{iSWAP}_{0,1}(\theta)}, \mathcal{L}_{\text{iSWAP}_{1,2}(\theta)}, \mathcal{L}_{\text{iSWAP}_{2,3}(\theta)}\}$ which are non-identity on each node are shown as an edge. Notice the similarities between this graph and the $s=2$ subgraph of Fig.~\ref{fig:workflow}(a): both have the same vertex set, but the edgeset of $D_\mathbb{L}$ is a subset of the edgeset in $D_U$. (b) Follow the recursion relation in Eq.~\eqref{eqn:recursionrelation} to iteratively build the symmetry-protected subspace $G_{\ket{1100}}$, starting from $\ket{\psi_0}=\ket{1100}$. Even though the graph in (a) is not equivalent to the graph in Fig.~\ref{fig:workflow}(a), their transitively closed graphs are equivalent, as can be seen by the vertices covered by the red line.
 }
\end{figure*}

\section{Algorithm: creating the symmetry-protected Subspace of an Initial State}\label{sec:algorithm1}
We now turn to the construction of the entire symmetry-protected subspace of a given initial state using $k$-local substring edit maps and transitive closure, a construction consistent with the observation that if a symmetry exists locally everywhere in a quantum circuit, then it will also manifest globally \cite{marvian2020locality}. The algorithm by which we do so works by establishing a recurrence relation for computing new states, associated by symmetry-protection to the initial state, in the simulation. This recurrence relation furnishes an efficient way to compute the transitive closure of select subgraphs of the entire Hilbert space, with no extraneous information. For a unitary simulation operator decomposed into enumerated local operations, $U(\tau_j) = \prod_{i=1}^{m} U_i(\tau_j)$, where each $U_i(\tau_j)$ is $k$-local, we will show that a symmetry-protected subspace $G_{\ket{\psi_0}}$ of the initial state $\ket{\psi_0}$ costs $O((m+1)\times|G_{\ket{\psi_0}}|)$ to compute and $O(|G_{\ket{\psi_0}}|)$ to store with a breadth-first search~\cite{Skiena2008}. We will give the algorithm for the case where the initial state $\ket{\psi_0}$ is a single measurement basis state (i.e., a product state). If $\ket{\psi_0}$ is a linear combination of measurement basis vectors, the algorithm can be repeated for each basis vector in the sum; this does not impact the asymptotic performance of the algorithm, as it only makes the computed subspace bigger.

We enumerate every state in a symmetry-protected subspace by transitively closing subgraphs created by the local basis substring edit maps established in Sec.~\ref{subsec:stringeditoperator}, which return a set of basis states evolved to by their corresponding unitary operators in $O(1)$ when operating on a single basis state. We recursively build the subspace by checking the set of substring edit maps $\mathbb{L}\equiv \{\mathcal{L}_{U_i}\: : \: U_i \in U\}$ on each new state, until none are added.  This process can be seen as the transitive closure of a subgraph of the graph $D_\mathbb{L}$. 
For $U_{\text{Trot}}$ in Eq.~\eqref{eq:trot_unitary}, Fig.~\ref{fig:recursiverelation}(a) shows an example of the subgraph of $D_{\mathbb{L}}$ corresponding to the action of each $\mathcal{L}_{\text{iSWAP}_{i,i+1}(\theta)} \in \mathbb{L}_{\text{Trot}}$ starting from the initial state $\ket{\psi_0}=\ket{1100}$ (self-edges are ignored as elsewhere in the manuscript).

\subsection{Recurrence relation}\label{subsec:recurrence_relation}
To find the symmetry-protected subspace of an initial state, we begin by computing the string edit map for each $k$-local unitary, $\mathbb{L}\equiv\{\mathcal{L}_{U_i} \: : \: U_i \in U\}$. Next, we check the set of measurement basis strings generated by operating with each substring edit map on the initial state, notated $\mathbb{L}(\ket{\psi_0}) \equiv \{\mathcal{L}_i(\ket{\psi_0}) \: : \: \mathcal{L}_i \in \mathbb{L}\}$. We define the set $T_{\ket{\psi_0}}^1\equiv \{\ket{\psi_0}\}\: \cup \: \mathbb{L}(\ket{\psi_0})$. Then, for each \textit{new} state $\ket{\phi}\in \mathbb{L}(\ket{\psi_0})$, we check operations under the substring edit maps: $T_{\ket{\psi_0}}^2=\mathbb{L}(T_{\ket{\psi_0}}^1)\equiv \{\mathbb{L}(\ket{\phi}) \: : \: \ket{\phi} \in \mathbb{L}(\ket{\psi_0})\} $. This process repeats until no new states are found through the following recurrence relation.
\begin{equation}\label{eqn:recursionrelation}
        \begin{cases}
        T_{\ket{\psi_0}}^1      \leftarrow \{\ket{\psi_0}\}   \cup \mathbb{L}(\ket{\psi_0})        & \text{base case}\\
        T_{\ket{\psi_0}}^{i+1}  \leftarrow T_{\ket{\psi_0}}^{i} \cup \mathbb{L}(T_{\ket{\psi_0}}^i) & \text{recursive case}\\
        T_{\ket{\psi_0}}^{i+1} = T_{\ket{\psi_0}}^i & \text{stop condition} 
        \end{cases}
\end{equation}
The stop condition activates if no new states are found, that is, when additional operations drawn from $\mathbb{L}$ do not unveil any new states. Steps 0-4 in Fig.~\ref{fig:recursiverelation}(b) show how the recurrence relation manifests for the input state $\ket{\psi_0}=\ket{1100}$ and the set of substring edit maps generated from the Trotterization in Eq.~\eqref{eq:trot_unitary}. We then define the symmetry-protected subspace $G_{\ket{\psi_0}}$ to which the state $\ket{\psi_0}$ belongs via
\begin{equation}\label{eq:TG_equiv}
G_{\ket{\psi_0}} \equiv T_{\ket{\psi_0}}^{i+1},
\end{equation}
where the definition ``$\equiv$'' in Eq.~\eqref{eq:TG_equiv} should be taken to mean ``all basis states in $T_{\ket{\psi_0}}^{i+1}$ viewed as nodes in a complete graph''. We can create the Kleene Closure of the set of substring edit maps, denoted $\mathbb{L}^{\star}$, which is the set of all finite concatenations of substring edit maps, including the identity. Any arbitrary string of substring edit maps $\mathbb{L}^{\star}$ applied to $\ket{\psi_0}$ will result in a state in $G_{\ket{\psi_0}}$ by its definition. Hence, one can write
\begin{equation}
\mathbb{L}^{\star}(\ket{\psi_0}) = G_{\ket{\psi_0}}.
\end{equation}
We now state our main result.
\begin{theorem}\label{thm:spstheorem}
        Let $U(t)=\mathcal{O}_{op} \big[\prod_{j=1}^p \prod_{i=1}^m U_i(\tau_j)\big]$ be a quantum simulation unitary of duration $t$ as in Eq.~\eqref{eqn:timeevol}, divided into $p$ time-steps, where each $U_i(\tau_j)$ is $k$-local and is time-step parameterized by $\tau_j$, and some operator ordering (such as time-ordering) is specified. Let $\mathcal{L}_{U_i}$ be the string edit map corresponding to any available parameterization of $U_i(\tau_j)$ and the set of such maps $\mathbb{L} \equiv \{\mathcal{L}_{U_i} \: : \: i \in \{1,\ldots,m\}\}$. Let $B(\mathcal{H}_{2^n})\equiv \{\ket{b}\}$ be the basis in which computations (measurements) are being performed classically (quantumly).
        Then, given an input state $\ket{\psi_0}$, expressed in the basis $B$, if $\ket{b}\notin G_{\ket{\psi_0}}$, where $G_{\ket{\psi_0}}$ is constructed according to Eqs.~\eqref{eqn:recursionrelation}-\eqref{eq:TG_equiv}, then $\bra{b}U(t)\ket{\psi_0} = 0$.
\end{theorem}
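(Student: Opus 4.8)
The plan is to prove the contrapositive: if $\bra{b}U(t)\ket{\psi_0}\neq 0$, then $\ket{b}\in G_{\ket{\psi_0}}$. First I would write the evolution operator explicitly as an ordered product of $L\equiv mp$ local factors, $U(t)=U_{i_L}(\tau_{j_L})\cdots U_{i_1}(\tau_{j_1})$, where each index $i_\ell\in\{1,\ldots,m\}$ labels one of the $m$ local operator types and $j_\ell$ its time-step. The point to flag at the outset is that the hypothesis supplies one edit map $\mathcal{L}_{U_i}$ per operator type that is valid for \emph{any} parameterization of $U_i$; hence the single map $\mathcal{L}_{U_{i_\ell}}\in\mathbb{L}$ covers the factor $U_{i_\ell}(\tau_{j_\ell})$ regardless of its time-step $j_\ell$, and the finite set $\mathbb{L}$ of $m$ maps suffices to track all $L$ factors.

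Next I would insert a resolution of the identity $\mathds{1}=\sum_{c\in B}\ket{c}\bra{c}$ between every adjacent pair of factors. A nonzero left-hand side then requires at least one nonzero summand in $\sum_{c_1,\ldots,c_{L-1}}\prod_{\ell=1}^{L}\bra{c_\ell}U_{i_\ell}(\tau_{j_\ell})\ket{c_{\ell-1}}$, with the convention $\ket{c_0}\equiv\ket{\psi_0}$ and $\ket{c_L}\equiv\ket{b}$. Such a summand identifies a sequence $\ket{\psi_0}=\ket{c_0}\to\ket{c_1}\to\cdots\to\ket{c_L}=\ket{b}$ in which every factor $\bra{c_\ell}U_{i_\ell}(\tau_{j_\ell})\ket{c_{\ell-1}}\neq 0$. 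By Def.~\ref{def:stringeditoperator}, each such nonzero matrix element means $\ket{c_\ell}\in\mathcal{L}_{U_{i_\ell}}(\ket{c_{\ell-1}})$, i.e. $(\ket{c_{\ell-1}}\leftrightarrow\ket{c_\ell})$ is an edge of the string interaction graph $D_\mathbb{L}$ (Def.~\ref{def:basisstringinteractiongraph}). The sequence is therefore a walk from $\ket{\psi_0}$ to $\ket{b}$ in $D_\mathbb{L}$; only the forward implication of Def.~\ref{def:stringeditoperator} is used, so the directed-versus-undirected subtlety is immaterial here.

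It then remains to show that the recurrence in Eq.~\eqref{eqn:recursionrelation} enumerates exactly the basis states reachable from $\ket{\psi_0}$ by such walks, so that the existence of this walk forces $\ket{b}\in G_{\ket{\psi_0}}$. I would establish this by induction on walk length: the base case $T_{\ket{\psi_0}}^1$ contains $\ket{\psi_0}$ together with all its one-edge neighbors $\mathbb{L}(\ket{\psi_0})$, and the recursive step $T_{\ket{\psi_0}}^{i+1}=T_{\ket{\psi_0}}^i\cup\mathbb{L}(T_{\ket{\psi_0}}^i)$ appends precisely the endpoints of walks one edge longer, so $T_{\ket{\psi_0}}^i$ is the set of states reachable within $i$ edges. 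Upon termination (the stop condition $T_{\ket{\psi_0}}^{i+1}=T_{\ket{\psi_0}}^i$) the set is closed under $\mathbb{L}$ and hence equals $\mathbb{L}^\star(\ket{\psi_0})=G_{\ket{\psi_0}}$. Since the walk found above has finite length $L$, its endpoint $\ket{b}$ lies in $G_{\ket{\psi_0}}$, completing the contrapositive.

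I expect the main obstacle to be this last step: rigorously certifying that the finite recurrence terminates at, and coincides with, the full set of reachable states containing $\ket{\psi_0}$ (rather than some proper subset), and that collapsing the $mp$ parameterized factors onto the $m$ operator-type maps loses no transitions. The destructive-interference caveat of Sec.~\ref{subsec:transitiveclosure} in fact works in our favor: ignoring amplitudes can only \emph{enlarge} $G_{\ket{\psi_0}}$, so a basis state omitted from $G_{\ket{\psi_0}}$ admits no surviving walk whatsoever, which is exactly the vanishing-amplitude conclusion we require.
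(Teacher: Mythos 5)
Your proposal is correct and follows essentially the same route as the paper's proof in Appendix B: both insert resolutions of the identity between the $mp$ local factors, use Def.~\ref{def:stringeditoperator} to identify nonzero local matrix elements with membership in the string edit maps, and invoke the closure of the recurrence Eq.~\eqref{eqn:recursionrelation} to conclude. The only difference is presentational --- you argue the contrapositive by extracting a walk from a surviving summand, whereas the paper directly restricts each successive summation to $T^{(j-1)m+i}_{\ket{\psi_0}}$ and observes the final overlap $\bra{b_f}\ket{b_{p,m}}$ vanishes --- which is the same mechanism in different logical dress.
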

\noindent
For a proof, see Appendix~\ref{appendix:prooftc}. Note that for $\bra{b_f}U(t)\ket{\psi_0}$ to vanish under these conditions, $G_{\ket{\psi_0}}$ must satisfy Def.~\ref{def:sps} as demonstrated in Lem.~\ref{lem:thm1_connection}. In other words, for $G_{\ket{\psi_0}}$ to be able to exclude particular basis states for arbitrary evolution times, it must be a symmetry-protected subspace.
Note that Thm.~\ref{thm:spstheorem} immediately provides two corollaries. 

\bigskip
\noindent
\textit{Corollary 1 (Post-Selection).} For simulation on an idealized, noise-free quantum computer, if $\ket{b_f}\notin G_{\ket{\psi_0}}$ then $||\bra{b_f}U(t)\ket{\psi_0}||^2 = 0$. Hence, if the state $\ket{b_f}$ is measured in the output of a noisy quantum device, it can be assumed that the state arose as a result of error, and may be discarded.

\bigskip
\noindent
\textit{Corollary 2 (Global Subspace).} We assumed a Trotterized form for $U(t)$ in the statement and proof of Theorem~\ref{thm:spstheorem}. However, we can formally recover the corresponding global simulation unitary by taking the limit $p\rightarrow \infty$ where $\tau_j = jt/p$ in the time-ordered case and $\tau_j = t/p \: \forall j$ when time-ordering is unnecessary (such as when the Hamiltonian is time-independent). Nothing in the proof of Theorem~\ref{thm:spstheorem} relies upon the finiteness of $p$ or discreteness of the corresponding time differential $t/p$. Therefore, our result holds for global simulation unitaries as well. This implies that one can reduce the resource requirements in classical simulations of $U(t)$ by only evolving basis states $\ket{b_f}\in G_{\ket{\psi_0}}$.

\subsection{Pseudocode}
With an understanding of the recurrence relation in Eq.~\eqref{eqn:recursionrelation} and how it can compute symmetry-protected subspaces, we present an algorithm which can enumerate these subspaces using a breadth-first search. Breadth-first search to enumerate an entire graph $(V,E)$ has computational complexity $O(|V|+|E|)$. In our implementation, there are $|G_{\ket{\psi_0}}|$ vertices and we check for $m$ edges at each vertex, giving $O(m\times |G_{\ket{\psi_0}}|)$ edges in the entire graph. Thus, our breadth-first search to enumerate the symmetry-protected subspace is $O(|G_{\ket{\psi_0}}| + m\times|G_{\ket{\psi_0}}|) = O((m+1)\times|G_{\ket{\psi_0}}|)$
\begin{algorithm}[H]
\caption{Enumerate symmetry-protected subspace $G_{\ket{\psi_0}}$ with a breadth-first search}\label{alg:makesps}
\begin{algorithmic}
        \Require{String edit maps $\mathbb{L}$ of the simulation operator, initial state $\ket{\psi_0}$}
        \State $G \gets \{\ket{\psi_0}\}$
        \State Let $Q$ be a first-in-first-out queue
        \State $Q\text{.enqueue(}\psi_0)$
        \While{$Q \neq \emptyset$}
            \State $\ket{b} \gets Q\text{.dequeue()}$
            \For {$\ket{b'} \in \{\mathcal{L}_{U_i}(\ket{b}) \: : \: \mathcal{L}_{U_i} \in \mathbb{L}\}$}
                \If{$\ket{b'} \notin G$}
                    \State $G \gets G \cup \ket{b'}$
                    \State $Q\text{.enqueue(}\ket{b'})$
                \EndIf
            \EndFor
        \EndWhile\\
        \Return{$G_{\ket{\psi_0}} \gets G$}
\end{algorithmic}
\end{algorithm}
Algorithm~\ref{alg:makesps} uses the set $G$, which is eventually the symmetry-protected subspace, to track which states have already been added to $Q$ during the runtime of the algorithm and prevent them from being checked more than once. If this set uses the hash of the basis state's bitstrings, insertion and search will be average-case $O(1)$. Getting the set of single-operator transitions $T_{\ket{b}}^1$ is $O(m)$, where $m$ is the number of unitary operators in the system, for a single state $\ket{b}$; while not all $m$ substring edit maps will provide an edge, as many might act as identity on the state $\ket{b}$, each string edit map must still be checked. This set is computed for each state discovered, and each state discovered is never added to the queue $Q$ more than once, which confirms our original complexity analysis of $O((m+1)\times |G_{\ket{\psi_0}}|)$.

This algorithm computes a set equivalent to that described by Eq.~\eqref{eqn:recursionrelation}. See Appendix~\ref{appendix:prooftc} for a proof that this set is a symmetry-protected subspace.

\begin{figure}[h]
        \centering
        \includegraphics[width=\linewidth]{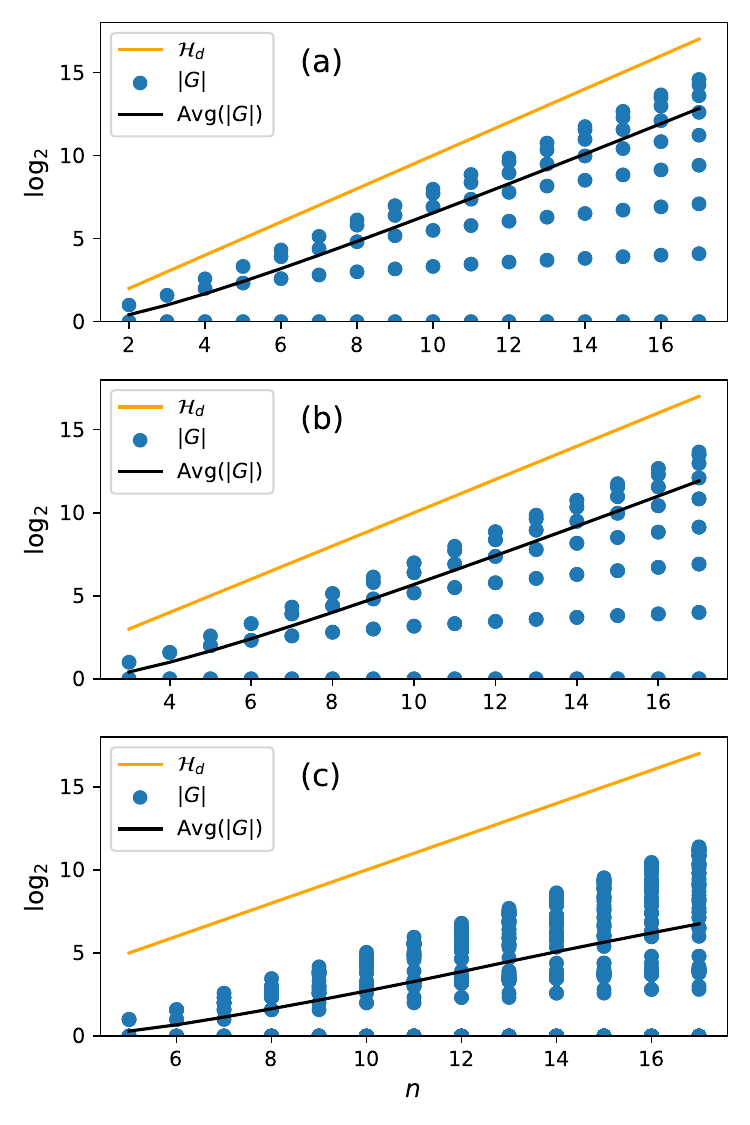}
        \caption{The number of states in symmetry-protected subspaces for each model, on a $\log_2$ scale. Each blue dot is the size of an individual subspace. (a) Heisenberg-XXX. (b) $T_6$ QCA. (c) $F_4$ QCA, which has no known symmetry operator.}
        \label{fig:scaling}
\end{figure}

\subsection{Usage and limits}\label{subsec:complexityscaling}
As stated in Sec.~\ref{sec:introduction}, post-selection for quantum simulations is the aim of our methods. In order to perform post-selection with the algorithm from this section, the entire symmetry-protected subspace $G_{\ket{\psi_0}}$ must be known, and each measurement $\ket{b_f}$ is verified via
\begin{equation}
        \begin{cases}
                \ket{b_f}\in G_{\ket{\psi_0}} & \text{assumed no error}\\
                \ket{b_f}\notin G_{\ket{\psi_0}} & \text{known error.}
        \end{cases}
\end{equation}
Because this requires the computation of the entire subspace, it can still become computationally intractable. As mentioned in Sec.~\ref{sec:introduction}, we benchmark this algorithm against the Heisenberg-XXX, T6 QCA, and F4 QCA quantum simulations. The computationally limiting factor of this algorithm is the size of the symmetry-protected subspace, as it has time complexity $O((m+1)\times|G_{\ket{\psi_0}}|)$ and spatial complexity $O(|G_{\ket{\psi_0}}|)$. See Fig.~\ref{fig:scaling} for a depiction of the size of the subspaces, up to 17 qubits. Figure~\ref{fig:scaling}(c) is especially significant because, as alluded to in Sec.~\ref{subsubsec:f4qca}, this model previously had no known conservation laws, but this data shows the partitioning of Hilbert space into symmetry-protected subspaces. By examining Fig.~\ref{fig:scaling}, we can see that the worst case of each subspace size, $\max(\{|G|\})$, is $\log_2(\max(\{|G|\}))\approx \log_2(|\mathcal{H}_d|) - k$ when the model is comprised of $k$-local operations. The important thing to see is that the size of symmetry-protected subspaces still scales exponentially in the worst and average cases; despite them being smaller than the full Hilbert space, they are only linearly smaller. We address this stop to post-selection with another algorithm in Sec.~\ref{sec:algorithm2}.

\section{Algorithm: Verification of a Shared symmetry-protected Subspace}\label{sec:algorithm2}

Because the worst-case size of a symmetry-protected subspace is still exponential in the number of qubits, Alg.~\ref{alg:makesps} is only applicable at relevant system sizes for simulations where $G_{\ket{\psi_0}}$ is not exponentially large. Thus, to generalize the usability of symmetry-protected subspaces to any simulation, we present an alternative algorithm in this section that uses an efficient, but greedy, heuristic. Instead of computing the entire exponentially large SPS, this method performs an efficient search for a path of substring edit maps to connect the initial state $\ket{\psi_0}$ and a measured state $\ket{b_f}$ in the graph $D_\mathbb{L}$. The heuristic nature of this algorithm means that it produces only approximate results (albeit with a degree of approximation that can be continuously improved at the cost of more time complexity), and since its runtime scales favorably, it can be used to check the output of quantum simulations well beyond the scale where other classical methods become intractable.


Naive search algorithms in a graph traditionally use a breadth-first search from an initial vertex~\cite{5219222}, which is what we described in Alg.~\ref{alg:makesps} to enumerate the symmetry-protected subspace through transitive closure; as stated, this is too computationally expensive in many cases.

Our greedy algorithm works as follows: Following all edges (application of the set of substring edit maps $\mathbb{L}$) from a state generates a set of possible states accessible from that state. Working both forward from $\ket{\psi_0}$ and backward from $\ket{b_f}$, we have found a \textit{path} when steps from both directions lead to a shared element. To detect whether this has happened, we use a simple observation about ordered sets: two sets are the \textit{same} set if they have the same minimal (or maximal) element. A natural order for sets of quantum computational basis states (i.e., binary strings) is just the integer they encode. So, our greedy algorithm works by building two sets (one starting from the initial state and one from the measured state) and comparing them; these sets are subsets of the symmetry-protected subspace corresponding to the initial and final states. If these sets have any common elements (which we check in constant time by looking at their \textit{minimal} elements), then we have found a path connecting the initial and measured state, and they occupy the same SPS.

In our methods, post-selection's accept/reject decision is made by checking for a measurement result in the symmetry-protected subspace of the initial state. Using the rationale outlined above, we can declare with certainty that two states share a SPS when their paths collide. On the other hand, if the two paths terminate in dissimilar minima we \textit{assume} the initial and measured states inhabit disjoint subspaces. We say ``assume'' because the accuracy of our heuristic differs depending on the simulation; when the locally-minimal choice at each step from either state does not build a path to the true minimum element, the two states may build distinct paths while occupying the same SPS.

\subsection{Searching in the string interaction graph} \label{subsec:pathfinding}
This algorithm builds a single branch, following local minima in depth-limited breadth-first searches, from a single starting vertex in the $D_\mathbb{L}$ graph, where each vertex $\ket{b}$'s edgeset is given by $\mathbb{L}(\ket{b})$.

Throughout this section we use the integers encoded by the bitstrings $b$ of basis states $\ket{b} \in B(\mathcal{H}_{2^n})$, which provides a natural ordering to states. Let the function $\min(A)$ on a set of states $A$ return the state with the smallest encoded integer in that set: 
\begin{equation}
\min(A) = \ket{b} \: s.t. \: b \leq b',\, \forall \, \ket{b'} \in A .
\end{equation}
To find the minimal element of the symmetry protected subspace, we start with an element $\ket{b_0} =\ket{\psi_0}$ or $\ket{b_f}$ and build a set towards the minimum of $G_{\ket{b_0}}$, notated $\min(G_{\ket{b_0}})$, with locally optimal decisions. At each step $\ket{b_{\text{curr}}}$ in the search, we compute $T_{\ket{b_{\text{curr}}}}^\mu$ from Eq.~\eqref{eqn:recursionrelation}, which is a breadth-first search to depth $\mu$, or every state in $\mathbb{L}^\mu(\ket{b_{\text{curr}}})$. Then, the starting point for the next step, $\ket{b_{\text{next}}}$, is the state with the smallest binary encoded integer in $T_{\ket{b_{\text{curr}}}}^\mu$, notated $\min(T_{\ket{b_{\text{curr}}}}^\mu)$. Repeat until the set $T_{\ket{b_{\text{curr}}}}^\mu$ does not offer a state smaller than $\ket{b_{\text{curr}}}$. Let this process be represented by the recursive function $\chi$:
\begin{equation}\label{eqn:iterativepathbuild}
    \chi(\ket{b_0}, \mu)=
        \begin{cases}
        \ket{b_{\text{curr}}} &\text{if} \ket{b_{\text{curr}}} = \min(T_{\ket{b_{\text{curr}}}}^\mu)\\
        \chi(\min(T_{\ket{b_{\text{curr}}}}^\mu), \mu)  & \text{otherwise}
        \end{cases}.
\end{equation}
To reiterate, because each step is given by applications of the substring edit maps, $\chi(\ket{b_0},\mu)$ and $\ket{b_0}$ must share a symmetry-protected subspace, i.e., $\chi(\ket{b_0},\mu) \in G_{\ket{b_0}}$. Therefore, when $\chi(\ket{\psi_0},\mu) = \chi(\ket{b_f},\mu)$ there \textit{must} be a sequence of substring edit maps between $\ket{\psi_0}$ and $\ket{b_f}$; i.e., a there is a path $\{\ket{\psi_0} \leftrightarrow \dots \leftrightarrow \ket{b_f}\} \subseteq D_\mathbb{L}$ and $\ket{b_f} \in \mathbb{L}^*(\ket{\psi_0})$. 


When both searches conclude in the true minimal element of their corresponding symmetry-protected subspaces, $\chi(\ket{\psi_0},\mu) = \min(G_{\ket{\psi_0}})$ and $\chi(\ket{b_f},\mu)=\min(G_{\ket{b_f}})$, we can conclude with certainty that the states do or do not inhabit the same subspace if $\chi(\ket{\psi_0},\mu) = \chi(\ket{b_f},\mu)$ or $\chi(\ket{\psi_0},\mu) \neq \chi(\ket{b_f},\mu)$. However, if either search does \textit{not} conclude in their targeted minimal state, our conclusions can be wrong. Suppose $\ket{b_f} \in G_{\ket{\psi_0}}$, which means that $\min(G_{\ket{\psi_0}})= \min(G_{\ket{b_f}})$, but the search result $\chi(\ket{b_f},\mu)$ finds a false minima of $G_{\ket{b_f}}$, meaning $\chi(\ket{b_f},\mu)\neq \min(G_{\ket{b_f}})$. When post-selecting with our procedure under these conditions, state $\ket{b_f}$ would be wrongly rejected because the underlying assumption for this heuristic, that $\chi(\ket{b_0},\mu) = \min(G_{\ket{b_0}})$, is wrong. In practice, the true minimal element of the symmetry-protected subspace is unknown; thus, the assertion that $\chi$ finds the minimum is always an assumption.

Consequently, if the result of the two searches is a collision, $\chi(\ket{\psi_0},\mu) = \chi(\ket{b_f},\mu)$, we \textit{know} the two states must share a symmetry-protected subspace, even if the searches are at a false minima. If the two searches do not find a common element, we \textit{assume} the initial and measured states occupy separate symmetry-protected subspaces. As such, measurement outcomes that lie within the intial state's symmetry-protected subspace can be rejected. Thus, each measurement $\ket{b_f}$ is verified with
\begin{equation}\label{eq:ps_with_path}
        \begin{cases}
                \chi(\ket{b_f},\mu) = \chi(\ket{\psi_0},\mu) & \text{assumed no error}\\
                \chi(\ket{b_f},\mu) \neq \chi(\ket{\psi_0},\mu) & \text{assumed error}
        \end{cases}.
\end{equation}
Formally, we can state that the protected subspace formed by $\chi$ around an initial state $\ket{\psi_0}$ is 
\begin{equation}
    P_G' = \sum_{b} \ket{b}\bra{b} \; s.t. \; \chi(\ket{b}, \mu) = \chi(\ket{\psi_0}, \mu)
\end{equation}
which is approximately equal to the true symmetry-protected subspace, $P_G \approx P_G'$. 

The confidence of this assumption depends on the quantum system being studied, and the depth $\mu$ of the local breadth-first searches. Thus, we present arguments to support it for our three exemplar systems: Heisenberg-XXX, $T_6$ QCA, and $F_4$ QCA as outlined in Sec.~\ref{subsec:quantumsimulation}. For the Heisenberg-XXX model this assumption is always correct at $\mu=1$: systems whose substring edit maps $\mathbb{L}$ are isomorphic to nearest-neighbor SWAP substring edit maps $\mathbb{L}_{\text{SWAP}}$, such as the Heisenberg-XXX model, provably find $\min(G_{\ket{b_0}})$ using $\chi(\ket{b_0}, \mu=1)$, as shown in Appendix~\ref{appendix:proofofswap}. We also numerically find that $\chi(\ket{b_0},\mu=2)$ is exact for the $T_6$ QCA model. Out of our three studied systems the $F_4$ QCA demonstrates the worst accuracy; $\chi(\ket{b_0},\mu=9)$ still returns false minima, which potentially causes a false rejection. In Sec.~\ref{subsubsec:pathaccuracydata} we analyze these properties of the $T_6$ and $F_4$ QCAs more closely. In Sec.~\ref{subsec:probps} we show that post-selection is still effective when $\chi$ causes a false rejection to occur. 

The computational performance of this method is also model-dependent. Let $|Depth|$ be the depth of our search, or the number of intermediate states traversed to (the second line in Eq.~\eqref{eqn:iterativepathbuild}) before finding a locally optimal vertex. This means the sequence of substring edit maps between $\ket{b_0}$ and $\chi(\ket{b_0},\mu)$ is length $O(\mu\times|Depth|)$, and the computational complexity of Eq.~\eqref{eqn:iterativepathbuild} is $O(m^\mu \times |Depth|)$.

Because the reliability and computational complexity of this post-selection method varies, before applying the algorithm to a given quantum simulation, benchmarks should be performed on a small version of the system to understand the $\mu$ required to obtain reliable results.

\subsection{Pseudocode}\label{subsec:pathfinding_pseudocode}

Here we describe the algorithm, Alg.~\ref{alg:greedypath}, to find the minimum binary encoded state in the symmetry-protected subspace of a given basis state. This algorithm starts at a state $\ket{b_0}$ and follows the local minima of limited breadth-first searches in an attempt to find the set's minimal element, resulting in $\chi(\ket{b_0},\mu)$.

Notice that the innermost \textbf{while} loop is borrowed from Alg.~\ref{alg:makesps} to compute each $T_{\ket{b_{\text{curr}}}}^\mu$, where $\ket{b_{\text{curr}}}$ is the current best minimum state. This time, the queue $Q$ uses the ordered tuple $(\ket{b'},\eta)$ as its elements, where $\ket{b'}$ is the current, un-checked, state in the breadth-first search and $\eta$ is the number of edges (i.e., depth) that $\ket{b'}$ is away from the current optimal solution $\ket{b_{\text{curr}}}$. Each node checked in this loop has $O(m)$ new edges, and we check a maximum depth of $\mu$, giving the inner loop computational complexity $O(m^\mu)$. The set $T^\mu_{\ket{b_\text{curr}}}$ is checked for every node traversed to in the search, and our search traverses to $|Depth|$ elements, which is how we arrive at an overall complexity of $O(m^\mu \times|Depth|)$ to compute $\chi(\ket{b_0},\mu)$.
\begin{algorithm}[H]
\caption{Greedy pathfinding to the minimum binary encoded state}\label{alg:greedypath}
\begin{algorithmic}
        \Require{Path starting state $\ket{b_0}$, substring edit maps $\mathbb{L}$, search limiting integer $\mu$}
        \State{$\ket{b_{\text{next}}} \gets \ket{b_0},\,\ket{b_{\text{curr}}} \gets$  null }
        \While{$\ket{b_{\text{curr}}} \neq \ket{b_{\text{next}}} $} 
            \State $T_{\ket{b_{\text{curr}}}} \gets \{\ket{b_{\text{curr}}}, \ket{b_{\text{next}}}\}$
            \State $\ket{b_{\text{curr}}} \gets \ket{b_{\text{next}}}$
            \State Let $Q$ be a queue
            \State $Q\text{.enqueue}(\ket{b_{\text{curr}}}, 0)$ \Comment{\parbox[t]{.5\linewidth}{Each element in $Q$ is tuple representing (state, depth)}}
            \While{$Q \neq \emptyset$} \Comment{\parbox[t]{.5\linewidth}{Breadth-first search centered around $\ket{b_{\text{curr}}}$}}
                \State $\ket{b'}, \eta \gets Q\text{.dequeue()}$
                \If{$\eta = \mu$} 
                    \textbf{break} \Comment{Reached the depth limit}
                \Else 
                    \For{$\ket{b''} \in \{\mathcal{L}_{U_i}(\ket{b'}) \: : \: \mathcal{L}_{U_i} \in \mathbb{L}\}$}
                        \If{$\ket{b''} \notin T_{\ket{b_{\text{curr}}}}$}
                            \State $Q\text{.enqueue}(\ket{b''}, \eta + 1)$
                            \State $T_{\ket{b_{\text{curr}}}} \gets T_{\ket{b_{\text{curr}}}} \cup \ket{b''}$
                        \EndIf
                    \EndFor
                \EndIf
            \EndWhile
            \State $\ket{b_{\text{next}}} \gets \min(T_{\ket{b_{\text{curr}}}})$
        \EndWhile\\
        \Return $\chi(\ket{b_0}, \mu) = \ket{b_{\text{curr}}}$
\end{algorithmic}
\end{algorithm}
To post-select with Alg.~\ref{alg:greedypath}, run it once for $\ket{\psi_0}$ to get $\chi(\ket{\psi_0},\mu)$, again for each measurement $\ket{b_f}$ to get $\chi(\ket{b_f},\mu)$, and check the results in Eq.~\eqref{eq:ps_with_path}. 

As an aside, if all states computed in the search, $T^\mu_{\ket{b_{\text{curr}}}}$, are cached with the result of their search, subsequent executions of the algorithm can be preempted with their previously calculated result if a state is found inside the cache.

\subsection{Benchmarking data for Algorithm~\ref{alg:greedypath}} \label{subsec:greedypathdata}
This subsection presents benchmark data for Alg.~\ref{alg:greedypath} that demonstrates its practical applicability for our example systems. These benchmarks include heuristic reliability and search depth. We supplement this data with proofs or explicit analytical forms when possible. When this is not possible, we instead rely on extrapolation from the data to inform the asymptotics of these quantities. We find that for the Heisenberg-XXX model, the heuristic is provably exact (Appendix~\ref{appendix:proofofswap}) and the search depth has an explicit equation (Appendix~\ref{appendix:pathlengthcalcalc}), while the $T_6$ and $F_4$ QCA models rely on data-driven intuition for these quantities.

\begin{figure}[htbp]
        \centering
        \includegraphics[width=\linewidth]{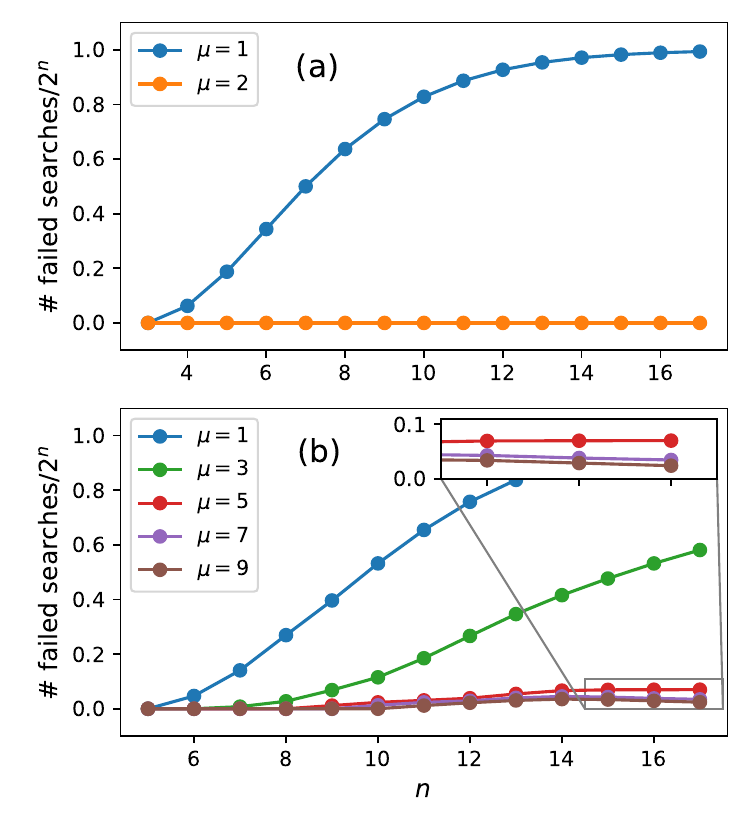}     
        \caption{The proportion of states $\ket{b}$ where $\chi(\ket{b_0},\mu) \neq \min(G_{\ket{b}})$ via Alg.~\ref{alg:greedypath}, called ``failed searches'', at different system sizes, for each $\mu$ used. (a) $T_6$ QCA. At $\mu=2$, Alg.~\ref{alg:greedypath} becomes exact for this system. (b) $F_4$ QCA. High $\mu$ can still fail, but does asymptotically better for $\mu\geq5$.}
        \label{fig:falseminima}
\end{figure}

\subsubsection{Search reliability}\label{subsubsec:pathaccuracydata} 
If searching for the minimal element from $\ket{b_0} =\ket{\psi_0}$ or $\ket{b_f}$ fails, because $\chi(\ket{b_0},\mu) \neq \min(G_{\ket{b_0}})$, then it is possible to falsely assume that two states occupy disjoint subspaces. As such, we examine the reliability of the searching function $\chi$ on each exemplar system.

In the Heisenberg-XXX model, the base case for $\chi$, as shown in Eq.~\eqref{eqn:iterativepathbuild}, only activates at a state which is always the minimum binary-encoded state in the symmetry-protected subspace. This is because the set of substring edit maps for the Heisenberg-XXX system is isomorphic to the set of substring edit maps for the nearest-neighbor SWAP network: $\mathbb{L}_{\text{HeisXXX}} = \mathbb{L}_{\text{SWAP}}$.  Appendix~\ref{appendix:proofofswap} proves that $\chi$ at $\mu=1$ is exact for $\mathbb{L}_{\text{SWAP}}$.

Because the $T_6$ QCA and $F_4$ QCA do not have SWAP-isomorphic substring edit maps, we do not have any available proofs for their exactness, and instead rely on simulating $\chi$ for these models. The results of this are in Fig.~\ref{fig:falseminima}, where for the $T_6$ and $F_4$ QCA models we plot the proportion of states $\ket{b} \in B(\mathcal{H}_d)$ such that $\chi(\ket{b},\mu) \neq \min(G_{\ket{b}})$ for each $n \in [k, 17]$. 

First, notice Fig.~\ref{fig:falseminima}(a), which shows search failures for the $T_6$ QCA model; while $\mu=1$ causes most searches to fail as $n$ increases, $\mu=2$ causes \textit{every} search to succeed for every $n$ in the domain. Thus, it we can assume that the heuristic for Alg.~\ref{alg:greedypath} is accurate for this model and that $\chi(\ket{b},\mu=2) = \min(G_{\ket{b}}),$ $\forall \ket{b} \in B$.

Next, notice Fig.~\ref{fig:falseminima}(b), which shows search failures for the $F_4$ QCA model; we use $\mu =1,3,5,7,9$, and find that $\chi$ can fail even with high $\mu$. We see that $\mu =1,3$  sees majority failures over the domain, that $\mu=5,7,9$ has bounded failure $\approx 0.05$, and that no $\mu$ completely removes failures.

This metric only determines the reliability of $\chi$, not the reliability of post-selecting with $\chi$. This is because, as discussed previously in Sec.~\ref{subsec:pathfinding}, two searches can fail and obtain the same false minima, $\chi(\ket{\psi_0},\mu) =\chi(\ket{b_f},\mu) \neq \min(G_{\ket{\psi_0}})=\min(G_{\ket{b_f}})$, i.e., two failed searches can still successfully identify that the states share a conserved quantity. Therefore, the search reliability shown in Fig.~\ref{fig:falseminima} should be seen as an \textit{upper-bound} on the error caused by the heuristic. 

Sections~\ref{subsec:exactps} and~\ref{subsec:probps} explore this reliability further in the context of post-selection on noisy quantum data. In the context of post-selection, we find that even small $\mu=3$ is acceptable for the $F_4$ QCA, as the most likely measurement outcomes also share false minima with the initial state. We discuss this observation more with the data that supports it in Sec.~\ref{subsec:probps}.

\begin{figure}[ht]
        \centering
        \includegraphics[width=\linewidth]{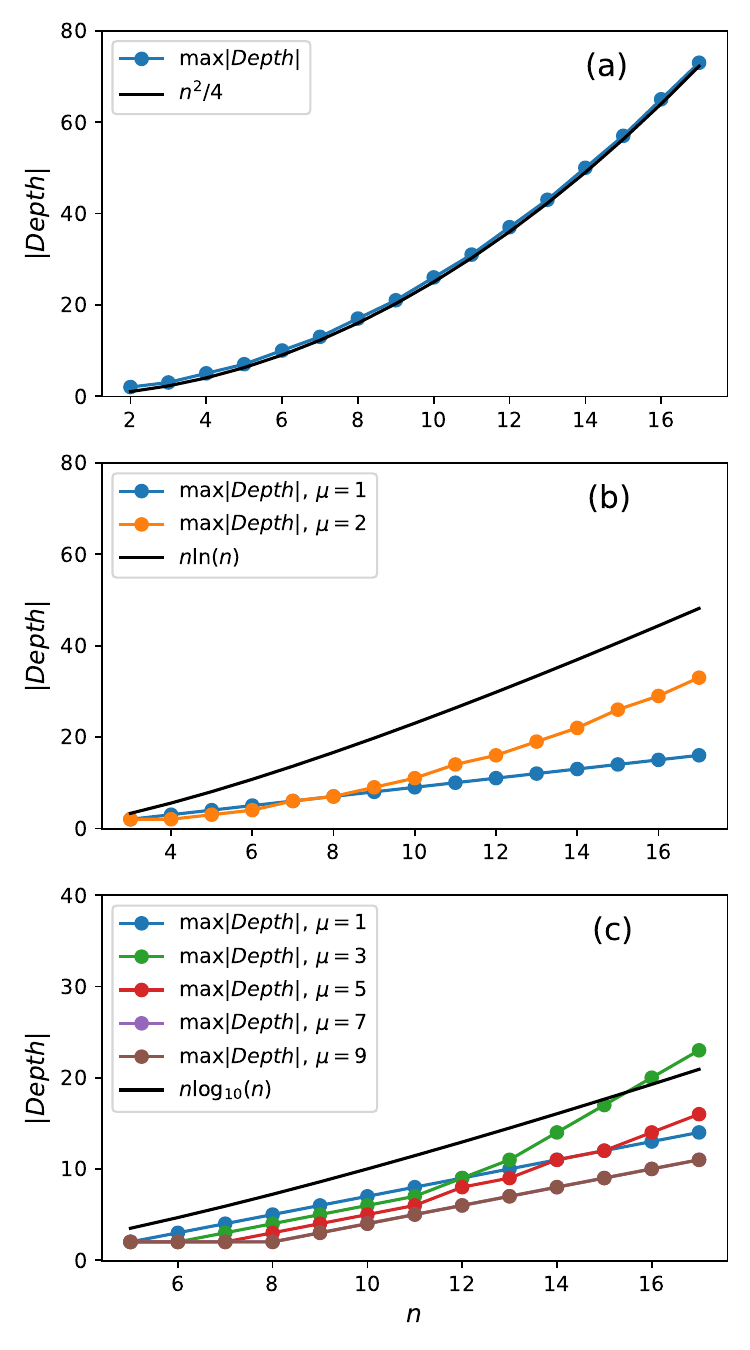}
        \caption{The longest path taken at each $n$ for the models (a) Heisenberg-XXX (b) $T_6$ QCA (c) $F_4$ QCA.}
        \label{fig:pathlength}
\end{figure}

\subsubsection{Search depth}\label{subsubsec:pathlengthdata} 
The time performance of Alg.~\ref{alg:greedypath}  is heavily dependent on the length of the path taken. The asymptotic worst-case of the search depth is system-dependent, being identified as $n^2/4$ for the Heisenberg-XXX model in Appendix~\ref{appendix:pathlengthcalcalc}, and closer to $O(n\ln(n))$ for the $T_6$ QCA and $O(n\log_{10}(n))$ for the $F_4$ QCA by observing the largest search depth for the algorithm at small $n$, shown in Fig.~\ref{fig:pathlength}. In this figure we find the largest path length at each system size for each relevant $\mu$. For the $T_6$ model, shown in Fig.~\ref{fig:pathlength}(a), search depth becomes longer as the parameter $\mu$ increases, indicating that false minima are avoided. For the $F_4$ model, shown in Fig.~\ref{fig:pathlength}(b), the search depth becomes shorter as $\mu$ increases; this is because higher $\mu$ means more edges are traversed at each step in the search, and thus less depth is needed in the search overall.

These path lengths, combined with the $\mu$ found in Sec.~\ref{subsubsec:pathaccuracydata}, results in the computational complexity of Alg.~\ref{alg:greedypath} on a single state being: Heisenberg-XXX is $O(mn^2)$, $T_6$ QCA is $O(m^2n\ln(n))$, and $F_4$ QCA is $O(m^\mu n\log_{10}(n))$. The $\mu$ exponent is left in the $F_4$ QCA computational complexity because, as shown in Sec.~\ref{subsubsec:pathaccuracydata}, we find no value of $\mu$ which guarantees that a search finds the minimal element, and thus expect $\mu$ to vary depending on the symmetry-protected subspace occupied for the simulation. As we can see in Fig.~\ref{fig:falseminima}(b), $\mu \geq 5$ is mostly sufficient.

\begin{figure*}[ht!]
        \centering
        \includegraphics[width=\linewidth]{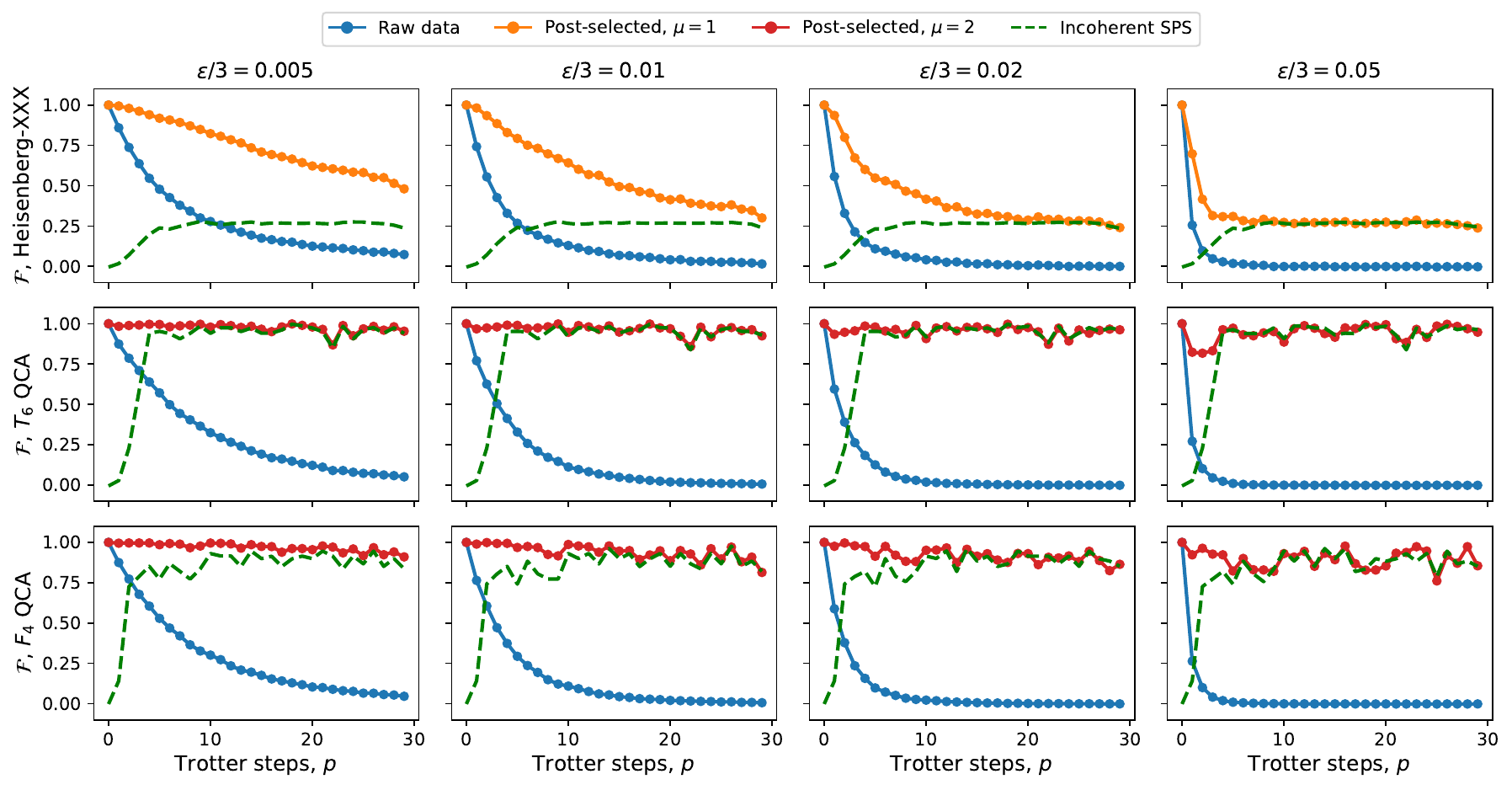}
        \caption{The fidelities $\mathcal{F}(P_{\text{sim}}(p), P_{\text{ideal}}(p))$, $P_{\text{sim}}=P_{\text{raw}}, P_{\text{ps},\mu=1}, P_{\text{ps},\mu=2}, P_{\text{ISPS}}$ of each model (rows) at errors $\epsilon/3=0.005, 0.01, 0.02, 0.05$ (columns). There are $30,000$ measurements at each Trotter step. The $\mu$ used is the smallest required to reach perfect shared subspace verification under Alg.~\ref{alg:greedypath}. This figure shows that with post-selection, model fidelity remains above random noise at least as long as $p=2*n$. }
        \label{fig:allps}
\end{figure*}

\section{Post-Selection in Emulated Noisy Simulations}\label{sec:psdemo}
To demonstrate the power of our algorithm, we perform automated post-selection on simulated quantum computations with noise. We perform discretized time evolutions with our exemplar systems, the Heisenberg-XXX, $T_6$ QCA, and $F_4$ QCA models, and show that our post-selection methods restructure noisy measurement probability distributions to be closer to the one found in an ideal simulation. We first run the simulation without errors via $U(t)\ket{\psi_0}$, constructing the measurement distribution at each time step. Next, we repeated the simulations with depolarizing noise injected after each layer of gates with independent probability $\epsilon/3=0.005, 0.01, 0.02, 0.05$ of $X_i, Y_i, Z_i$ on each qubit $i$. See Appendix~\ref{appendix:simnoisemodel} for more noise model details. We use Kullback-Liebler divergence~\cite{kullback1951information} to quantify the distance between the ideal and noisy measurement distributions constructed at each measurement layer, with and without post-selection using Alg.~\ref{alg:greedypath} and the symmetry-check in Eq.~\eqref{eq:ps_with_path}.

We observe a significant increase in accuracy of the data when using our post-selection methods without adding circuit runs, even in simulation subspaces which have imperfect pathfinding.

\subsection{Methods}\label{subsec:methods}
We simulate discrete time evolution, and for each discrete time step a sequence of measurements is used to construct a probability distribution. Let $P(p)$ be the measurement distribution of the wavefunction after $p$ Trotter steps, such that $P(p)\approx || \ket{\psi(p)} ||^2$ up to shot noise. We compare an ideal simulation measurement distribution $P_{\text{ideal}}(p)$, a noisy simulation without post-selection $P_{\text{raw}}(p)$, and a noisy simulation with post-selection using Alg.~\ref{alg:greedypath} at a given $\mu$,  $P_{\text{ps}, \mu}(p)$. Let $P(p)$ be a normalized probability distribution constructed by a sequence of $M$ measurements after $p$ Trotter steps, and $P(p,b)\approx ||\bra{b}\ket{\psi(p)}||^2$, which is the probability amplitude of state $\ket{b}$ at that layer $p$, up to shot noise. We use Kullback-Liebler divergence~\cite{kullback1951information},
\begin{equation}\label{eqn:kldiv}
\mathcal{D}(P(p), Q(p)) \equiv \sum_{b \in B(\mathcal{H}_d)} P(p,b) \ln (\frac{P(p,b)}{Q(p,b)}),
\end{equation}
 which measures the distance between two probability distributions, and use the following equation from~\cite{PhysRevResearch.2.043042} as a simulation's fidelity,
 \begin{equation}\label{eqn:errormetric}
 \mathcal{F}(P_{\text{sim}}(p), P_{\text{ideal}}(p)) \equiv 1 - \frac{\mathcal{D}(P_{\text{sim}}(p), P_{\text{ideal}}(p))} {\mathcal{D}(P_{\text{IRN}}, P_{\text{ideal}}(p))},
 \end{equation}
where $P_{\text{IRN}}$ is the \textbf{i}ncoherent \textbf{r}andom \textbf{n}oise probability distribution
\begin{equation}
P_{\text{IRN}}(b) \equiv \frac{1}{d} \: \: \forall \: \: \ket{b} \in B(\mathcal{H}_d) 
\end{equation}
which we expect to measure once noise has proliferated in the computation. Equation~\eqref{eqn:errormetric} will return 1 if $P_{\text{sim}}= P_{\text{ideal}}$, and 0 if $P_{\text{sim}}= P_{\text{IRN}}$. We compute Eq.~\eqref{eqn:errormetric} at each measurement layer in the computation, with $P_{\text{sim}}$ being either the raw data $P_{\text{raw}}$ or the post-selected data $P_{\text{ps}}$.

We post-select by removing measurement results that fail under Alg.~\ref{alg:greedypath} and Eq.~\eqref{eq:ps_with_path} and re-normalizing the probability distribution. If $P_\text{raw}(p)$ is the measurement distribution with noise at step $p$ of $M$ measurements, $P_{\text{ps}}(p)$ is the post-selected measurement distribution, given by checking each state in $P_\text{raw}(p)$ in Alg.~\ref{alg:greedypath}, at the same Trotter-step $p$ with measurements $M_\text{kept} \leq M$.

\begin{figure*}[ht]
        \centering
        \includegraphics[width=\linewidth]{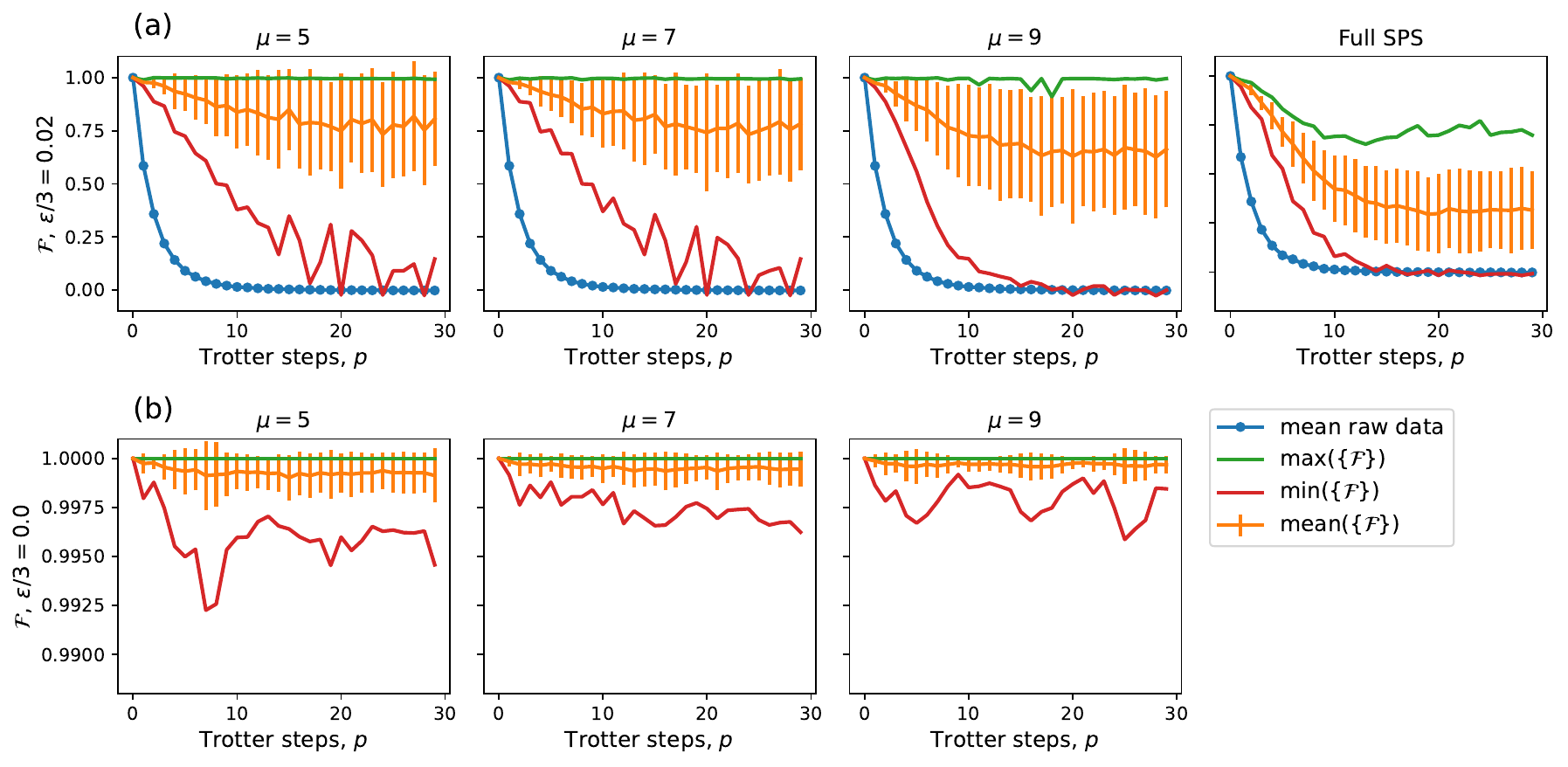}
        \caption{The maximum, minimum, mean, and standard deviation of the set of fidelities, $\{\mathcal{F}\}$, generated by simulating the $F_4$ QCA for an initial condition in each of its incomplete subspaces at $\mu=9$ with post-selected distributions $P_{\text{sim}} =P_{\text{ps}, \mu=5}, P_{\text{ps}, \mu=7},P_{\text{ps}, \mu=9}, P_{\text{SPS}}$. Each Trotter-step has $M=10000$ measurements. The error rate is (a) $\epsilon/3=0.02$ and (b) $\epsilon/3=0.0$, or an ideal simulation.}
        \label{fig:badps}
\end{figure*}

\subsection{Post-selection with perfect symmetry-protected subspaces}\label{subsec:exactps}
We run a 15 qubit simulation of each of the three exemplar models, using initial conditions seen in the literature, the anti-ferromagnetic state $\ket{\psi_0} = \ket{1010\hdots101}$ for the Heisenberg-XXX model, a single bit-flip on the middle qubit $\ket{\psi_0} = \ket{0\hdots1\hdots0}$ for the $T_6$ QCA~\cite{Jones2022,Hillberry2021}, and two bit-flips neighboring the center qubit $\ket{\psi_0} = \ket{0\hdots101\hdots0}$ for the $F_4$ QCA~\cite{Hillberry2021}. No false minima are seen past $\mu=2$ in any of the simulations for these initial conditions, meaning post-selection functions effectively. The data uses $M=30,000$ measurements at each measurement layer and does not add any measurements to replace error-victim circuit runs. 

Results can be seen in Fig.~\ref{fig:allps}. We observe that post-selected data approaches incoherent noise in the SPS of that initial state $P_{\text{ISPS}}$, defined as 
\begin{equation}
P_{\text{ISPS}}(b) \equiv
\begin{cases}
\frac{1}{|G|} &\:  \ket{b} \in G\\
0 &\:  \ket{b} \notin G\\
\end{cases}
\end{equation}
Because our simulation's raw data converges to random noise, which is a uniform probability distribution over the Hilbert space, and post-selection only removes measurement outcomes, we should expect that the result of post-selection is still a uniform probability distribution, just over the symmetry-protected subspace instead. In other words, we should observe that 
\begin{equation}
\lim_{p\rightarrow\infty}\mathcal{F}(P_{\text{ps}}(p), P_{\text{ideal}}(p)) \approx \mathcal{F}(P_{\text{ISPS}}(p), P_{\text{ideal}}(p))
\end{equation}
given enough measurements.

\subsection{Post-selection with incorrect symmetry-protected subspaces}\label{subsec:probps}
We show that even in subspaces of the $F_4$ QCA model where Alg.~\ref{alg:greedypath} has failures at high $\mu$, which is equivalent to false-positive mislabeling some basis vectors as \textit{outside} the symmetry-protected subspace, our methods mitigate more errors than they introduce and thus still show promise for NISQ quantum simulations. 

We found 18 symmetry-protected subspaces of the $F_4$ QCA that had false minima at $n=15$ with $\mu\geq9$ (henceforth called incomplete subspaces) and ran simulations to $p=29$ with $M=10000$ measurements at each cycle, initialized in one state from each rocky subspace. See Appendix~\ref{appendix:f4_sim_initconds} for a list of these initial states. This way, Alg.~\ref{alg:greedypath} is highly likely to meet false minima and reject measurements that were in the symmetry-protected subspace of the initial state. We calculate $\mathcal{F}(P_{\text{ps},\mu}(p), P_{\text{ideal}}(p))$ for each simulation, and run our post-selection method with $\mu =5,7,9$. See Fig.~\ref{fig:badps}(a) for the set of fidelities $\{\mathcal{F}\}$ at $\epsilon/3=0.02$, compared to post-selection with the ``Full SPS'', which is post-selection using the full symmetry-protected subspace, instead of the heuristic; in the average case our post-selection keeps the simulation data above incoherent noise, even with lower $\mu$. The worst-case fidelity (in red) drops to the fidelity of incoherent noise, which indicates that our methods only offer improvement on particular simulations. However, the mean (orange) and standard deviation (orange bars) remain reliably higher than incoherent noise, with the maximum fidelity keeping close to the ideal simulation. 

The other important observation is that our post-selection, which is using an inaccurate symmetry-protected subspace, obtains a higher mean and maximum fidelity than post-selection with a well-defined symmetry-protected subspace, with fidelity lowering as $\mu$ increases. We theorize the reason for this is as follows: if two states share a false minima with $\chi$, they are likely close to each other in the string interaction graph $D_\mathbb{L}$, and thus have a higher probability of transitions between each other in the simulation. On the other hand, if a measured state does \textit{not} share a false minima with the initial state, it is less likely to be observed in a simulation, but a noisy simulation will artificially amplify the probability of that state. The result is that post-selection with the true SPS will accept these artificially amplified results, where our false SPS will coincidentally reject states that can possibly have this happen. 

We also examine our post-selection in a simulation absent any error; this is to see how much a false rejection can degrade a perfect computation. See Fig.~\ref{fig:badps}(b) for post-selection with an error-free simulation. It can be seen that most fidelities remain above $\approx0.999$, meaning most of the time this error-prone post-selection method will have minimal impact even on an ideal simulation. The worst-case degraded fidelity is $\approx 0.992$.

\section{Discussion and conclusion}
We introduce a graph theory interpretation of quantum time evolution, which provides a theoretical framework through which symmetry-protected subspaces can be constructed via transitive closure. We identify that these invariant subspaces are an operator-free method for characterizing symmetry in the system, by indicating a conserved quantity of an initial state as it is manipulated by the dynamical system without needing to explicitly identify it. Along these lines, our approach complements recent work in open quantum systems where symmetry-protected density matrices were constructed without the knowledge of explicit symmetry operators \cite{thingna2021degenerated}.

We observe that a symmetry-protected subspace can be used to provide a smaller computational space, post-select noisy quantum simulation data, or be analyzed to deduce a symmetry operator. We  identify post-selection as a pertinent application and introduce two main classical algorithms that elucidate the features of a quantum system's symmetry-protected subspaces. These algorithms employ a \textit{basis string edit map}, which is an efficient construction to provide the local dynamics of an operator by focusing on the presence or absence of basis vectors through its action.

The first algorithm uses transitive closure, calculated with breadth-first search of these basis string edit maps, to enumerate every state within the symmetry-protected subspace of the initial state. Because post-selection with this subspace requires constructing the entire SPS, which can still be exponentially difficult, we provide a second, polynomial-scaling algorithm. The second algorithm attempts to find the smallest basis state (by binary-encoded integer) in an SPS by following a locally-optimal heuristic, and establishing that a path of basis string edit maps exists between an initial and final state if their respective searches  collide at any elements.

We conclude by demonstrating post-selection using our second algorithm, which shows that even when the raw simulation data degrades to incoherent noise, our methods effectively recover a probability distribution closer to an ideal computation. Our methods are compatible with any other error-mitigation technique compatible with post-selection, such as zero noise extrapolation, and thus present a further addition to a growing array of techniques to improve noisy quantum computation~\cite{endo2021hybrid}.

We identify a few obvious extensions to this project. First, if Alg.~\ref{alg:makesps} concludes with certain classical resources, this implies that the wavefunction can be stored with constant overhead on those same resources; this could shrink the computational memory requirements from the quantum regime to the classical regime. Second, we speculate that more reliable algorithms than Alg.~\ref{alg:greedypath} may fulfill the same function; a string-matching algorithm similar to Needleman-Wunsch~\cite{NEEDLEMAN1970443} with the basis string edit maps, instead of insert/delete/shift edits, may fulfill this function.

This work also presents interesting results on post-selection that are worth further exploration. The first is the convergence to incoherent noise \emph{within} a symmetry-protected subspace; this shows that, as expected, post-selection will not converge to an ideal computation, regardless of the additional circuit runs to replace error-victim runs detected, as error will still have proliferated within the the SPS. However, such incoherent  distributions still have significant complexity and nontrivial structure (see e.g.,~\cite{Jones2022}), making them particularly interesting in cases where the underlying symmetry is not analytically known. The second is the observation that mis-identifying measurements within the SPS paradoxically results in a more accurate computation than what results from full knowledge of the SPS. The exact cause of this is up to speculation, and is likely a problem-dependent effect, but is interesting in its own right.

Finally, since the algorithms presented herein are specified with respect to a particular simulation or measurement basis, our methods are currently constrained to discovering subspaces that are protected either by a single symmetry generator in that basis or by a collection of Abelian generators, which are diagonal (or have diagonal representations) in the chosen basis. However, constraining many-body dynamics by sets of non-Abelian generators often yields rich physics as has been noted with respect to non-Abelian thermal states \cite{yunger2016microcanonical}, bipartite entanglement entropy growth \cite{majidy2023non}, eigenstate thermalization \cite{murthy2022non}, and thermalization in finite-size quantum simulators \cite{halpern2020noncommuting, kranzl2022experimental}. As such, extending our subspace detection framework to subspaces that are protected by multiple non-Abelian generators would be a fruitful direction for future research.

Code and data are available upon reasonable request to the corresponding authors.

\section{Acknowledgements}

This material is based upon work supported by the U.S. Department of Energy, Office of Science, National Quantum Information Science Research Centers, Superconducting Quantum Materials and Systems Center (SQMS) under contract number DE-AC02-07CH11359, and by National Science Foundation grant PHY-1653820. 
This work was also authored in part by the National Renewable Energy Laboratory (NREL), operated by Alliance for Sustainable
Energy, LLC, for the U.S. Department of Energy (DOE) under Contract No. DE-AC36- 08GO28308. This work
was supported by the Laboratory Directed Research and Development (LDRD) Program at NREL. The views expressed
in the article do not necessarily represent the views of the DOE or the U.S. Government. The U.S. Government retains
and the publisher, by accepting the article for publication, acknowledges that the U.S. Government retains a nonexclusive,
paid-up, irrevocable, worldwide license to publish or reproduce the published form of this work, or allow others
to do so, for U.S. Government purposes.

\renewcommand{\appendixname}{APPENDIX}
\appendix

\appsection{PROOF OF ACCEPTABLE MAXIMAL COVERAGE IN SYMMETRY-PROTECTED SUBSPACES}\label{appendix:proofdestint}

Take the operator $P_G = \sum _{b\in G_{\ket{\psi_0}}} \ket{b}\bra{b}$ to be the projection operator for the symmetry-protected subspace of $\ket{\psi_0}$:
\begin{equation}
[P_{G_{\ket{\psi_0}}}, U] = 0
\end{equation}
and suppose it has been constructed such that for each $\ket{b} \in G_{\ket{\psi_0}}$, $\exists\, t\; s.t. \; \bra{b}U^t\ket{\psi_0} \neq 0$. Now suppose that the basis vector $\ket{k}$ is never seen by time evolution of $U$, such that 
\begin{equation}\label{eqn:appendixnonspsstate}
\bra{k}U(t)\ket{\psi_0}=0\; \forall t.
\end{equation}
If we include $\ket{k}$ and the symmetry-protected subspace of $\ket{k}$ in our protected subspace, such that $G' = G_{\ket{\psi_0}} \cup G_{\ket{k}}$, then $P_{G'} = P_{G_{\ket{\psi_0}}} + \sum_{j\in G_{\ket{k}}} \ket{j}\bra{j}$, and we get
\begin{align}
[P_{G'}, U] &= [P_{G_{\ket{\psi_0}}} + P_{G_{\ket{k}}}, U]\\
        &= [P_{G_{\ket{\psi_0}}}, U] + [P_{G_{\ket{k}}}, U]\\
        &= 0
\end{align}
by the Def.~\ref{def:sps} of a symmetry-protected subspace for both $G_{\ket{\psi_0}}$ and $G_{\ket{k}}$. This shows that if a state which is never seen by the time evolution is included in the final symmetry-protected subspace, the projection operator still commutes and thus the subspace still follows Def.~\ref{def:sps}.

\appsection{Proof of Theorem~\ref{thm:spstheorem}}\label{appendix:prooftc}

We take $\ket{\psi_0}\in B$ to be a product state. Due to linearity, generalizing the proof to superposition states is trivial, given that one interprets $G_{\ket{\psi_0}}$ as being the union of all symmetry-protected subspaces of each of which corresponds to one (or more) basis states in the superposition. For notational simplicity, and without loss of generality, we also assume an implicit operator ordering $\mathcal{O}_{op}$. Consider the transition amplitude

\begin{equation}\label{eq:appx_amplitude}
\begin{split}
\bra{b_f}U(t)\ket{\psi_0}&=\bra{b_f}\prod_{j=1}^p\prod_{i=1}^m U_i(\tau_j)\ket{\psi_0}\\
&=\bra{b_f}\prod_{j=1}^p\prod_{i=1}^m \sum_{\ket{b_{j,i}}\in B} \ket{b_{j,i}}\bra{b_{j,i}} U_i(\tau_j)\ket{\psi_0},
\end{split}
\end{equation}
where in the second line of Eq.~\eqref{eq:appx_amplitude} we have inserted the identity operator $\mathds{1}=\sum_{\ket{b_{j,i}}\in B} \ket{b_{j,i}}\bra{b_{j,i}}$ after each $k$-local unitary operator. Without loss of generality, we assume an operator ordering to expand and rearrange the right-hand side of Eq.~\eqref{eq:appx_amplitude},

\begin{equation}\label{eq:amplitude_expansion}
\begin{split}
&\bra{b_f}U(t)\ket{\psi_0}=\\
&=\sum_{\ket{b_{p,m}}\in B} \bra{b_f}\ket{b_{p,m}} \sum_{\ket{b_{p,m-1}}\in B}\bra{b_{p,m}}U_{m}(\tau_p)\ket{b_{p,m-1}} \ldots \\
&\ldots \sum_{\ket{b_{1,1}}\in B} \bra{b_{1,2}}U_2(\tau_1)\ket{b_{1,1}} \bra{b_{1,1}} U_1(\tau_1)\ket{\psi_0}.
\end{split}
\end{equation}
Next, we note that by the definition of the string edit map, Def.~\ref{def:stringeditoperator}, and the presence of the local transition amplitude $\bra{b_{1,1}} U_1(\tau_1)\ket{\psi_0}$ in Eq.~\eqref{eq:amplitude_expansion}, we can re-write

\begin{equation}\label{eq:resum}
\begin{split}
& \sum_{\ket{b_{1,1}}\in B} \bra{b_{1,2}}U_2(\tau_1)\ket{b_{1,1}} \bra{b_{1,1}} U_1(\tau_1)\ket{\psi_0} = \\
=&\sum_{\ket{b_{1,1}}\in \mathcal{L}_{U_1}(\ket{\psi_0})} \bra{b_{1,2}}U_2(\tau_1)\ket{b_{1,1}} \bra{b_{1,1}} U_1(\tau_1)\ket{\psi_0} \\
=&\sum_{\ket{b_{1,1}}\in T^1_{\ket{\psi_0}}} \bra{b_{1,2}}U_2(\tau_1)\ket{b_{1,1}} \bra{b_{1,1}} U_1(\tau_1)\ket{\psi_0},
\end{split}
\end{equation}
where in the third line of Eq.~\eqref{eq:resum} we have used the fact that $\mathcal{L}_{U_1}(\ket{\psi_0})\subseteq T^1_{\ket{\psi_0}}\subseteq B$, which is to say, we can expand the summation to run over $T^1_{\ket{\psi_0}}$ by summing over terms where the local transition amplitude is zero. By the same logic, we can re-write the second summation as
\begin{equation}\label{eq:second_resum}
\begin{split}
\sum_{\ket{b_{1, 2}} \in B} (\ldots) &=\sum_{\ket{b_{1, 2}} \in \mathcal{L}_{U_2}(\ket{b_{1,1}})\: : \: \ket{b_{1,1}}\in \mathcal{L}_{U_1}(\ket{\psi_0})} (\ldots)\\
&= \sum_{\ket{b_{1, 2}} \in T^2_{\ket{\psi_0}}}(\ldots),
\end{split}
\end{equation}
where again, we have simplified notation by noting that $\{\mathcal{L}_{U_2}(\ket{b_{1,1}})\: : \: \ket{b_{1,1}}\in \mathcal{L}_{U_1}(\ket{\psi_0})\} \subseteq T^2_{\ket{\psi_0}}$ and summing over all additional states in $T^2_{\ket{\psi_0}}$ to which the transition amplitude is zero. Generally, one can re-write summation $(j-1)m+i$ as

\begin{equation}\label{eq:arb_resum}
\sum_{\ket{b_{j, i}} \in B} (\ldots) = \sum_{\ket{b_{j, i}} \in T^{(j-1)m+i}_{\ket{\psi_0}}}(\ldots).
\end{equation}
We therefore write the full transition amplitude as 

\begin{equation}\label{eq:final_amplitude_expansion}
\begin{split}
&\bra{b_f}U(t)\ket{\psi_0}=\\
&\sum_{\ket{b_{p,m}}\in T_{\ket{\psi_0}}^{pm}} \bra{b_{f}}\ket{b_{p,m}}\sum_{\ket{b_{p,m-1}}\in T_{\ket{\psi_0}}^{pm-1}}\bra{b_{p,m}}U_{m}(\tau_p)\ket{b_{p,m-1}} \\
&\ldots \sum_{\ket{b_{1,1}}\in T^1_{\ket{\psi_0}}} \bra{b_{1,2}}U_2(\tau_1)\ket{b_{1,1}} \bra{b_{1,1}} U_1(\tau_1)\ket{\psi_0}.
\end{split}
\end{equation}
There are now two cases to consider. Let $T_{\ket{\psi_0}}^{\ell}$ be the iteration at which the stop condition in Eq.~\eqref{eqn:recursionrelation} is activated. Either $T_{\ket{\psi_0}}^{pm} = T_{\ket{\psi_0}}^{\ell} \equiv G_{\ket{\psi_0}}$, or else $T_{\ket{\psi_0}}^{pm} \subset T_{\ket{\psi_0}}^{\ell}$. In either case, $\ket{b_f}\notin G_{\ket{\psi_0}} \implies \ket{b_f}\notin T_{\ket{\psi_0}}^{pm}$, which means that $\bra{b_f}\ket{b_{p,m}}=0\: \: \forall \: \: \ket{b_{p,m}} \in T_{\ket{\psi_0}}^{pm}$ and the full transition amplitude must vanish. \qedsymbol{}

\appsection{Analysis of Algorithm~\ref{alg:greedypath}}\label{appendix:proofofgreedypathfinding}
\subsection{Proof of exactness for $\mathbb{L}_{\text{SWAP}}$ isomorphic systems}\label{appendix:proofofswap}
We will show that for a system $U$ which has $\mathbb{L}_{U} = \mathbb{L}_{\text{SWAP}}$, iterative path creation with Eq.~\eqref{eqn:iterativepathbuild} (which is what Alg.~\ref{alg:greedypath} computes) has only one stop condition when using $\mu=1$, which is the binary encoded minima of the entire symmetry-protected subspace.

Assume our simulation of $n$ qubits has substring edit maps $\mathbb{L}_{\text{SWAP}} \equiv \{\mathcal{L}_{\text{SWAP}_{0,1}},\mathcal{L}_{\text{SWAP}_{1,2}},\hdots,\mathcal{L}_{\text{SWAP}_{n-2,n-1}}\}$. We take $\ket{b_0}$ to be the first state in the path, and it is a bitstring $\ket{b_0} = \ket{\{0,1\}^n}$ such that $\sum_{i=0}^{n-1} a_i^\dagger a_i \ket{b_0} = s\ket{b_0}$. The smallest binary encoding belonging to a state in the symmetry-protected subspace of $\ket{b_0}$, which is the SPS encoding particle conservation symmetry, will belong to the bitstring $\min(G_{\ket{b_0}}) = \ket{1^s0^{n-s}}$. The bits are ordered left-to-right as least-to-most significant. 

The string edit map $\mathcal{L}_{\text{SWAP}_{i,i+1}}$ acting on its local basis set $B_i =\{ \ket{00}, \ket{01}, \ket{10}, \ket{11}\}$ has the following mappings:
\begin{equation}\label{eq:allswapmaps}
\begin{split}
\mathcal{L}_{\text{SWAP}_{i,i+1}}(\ket{00})&= \{\ket{00}\}\\
\mathcal{L}_{\text{SWAP}_{i,i+1}}(\ket{01})&= \{\ket{10},\ket{01}\}\\
\mathcal{L}_{\text{SWAP}_{i,i+1}}(\ket{10})&= \{\ket{01},\ket{10}\}\\
\mathcal{L}_{\text{SWAP}_{i,i+1}}(\ket{11})&= \{\ket{11}\}\\
\end{split}.
\end{equation}
When applied to a $n$-qubit state represented as an $n$-character bitstring $b$, the mapping operates as e.g. 
\begin{equation}\label{eq:mapexampleforproof}
\begin{split}
\mathcal{L}_{\text{SWAP}_{i,i+1}}(&\ket{b[0,i-1];01;b[i+2,n-1]}) \\
&= \ket{b[0,i-1];10;[i+2,n-1]},
\end{split}
\end{equation}
where the ``$;$'' symbol means string concatenation and ``$b[i,j]$'' is the substring of $b$ from characters at index $i$ through $j$ (inclusive).

Given an incomplete search with $\chi(\ket{b_0},\mu=1)$, where $\ket{b_{j}}=\min(T^1_{\ket{b_{j-1}}})$ is the most recent progression in the search, the next state $\ket{b_{j+1}}$ in the search is the smallest bitstring in $\mathbb{L}_{\text{SWAP}} (\ket{b_j})$, given by $\ket{b_{j+1}}=\min(T^1_{\ket{b_j}})$; this will be $\mathcal{L}_{\text{SWAP}_{i,i+1}}(\ket{b_{j}})$, where $\ket{b_j[i+1]}$ is the \textit{most} significant bit equal to $\ket{1}$ such that $\ket{b_j[i]}=\ket{0}$. 

In other words, our heuristic dictates that we will only choose $\mathcal{L}_{\text{SWAP}_{i,i+1}}(\ket{b_j})$ when $\ket{b_j[i,i+1]} = \ket{01}$. Therefore, Alg.~\ref{alg:greedypath} will continue until there is no substring ``01'' in $b_j$. For a bitstring, the only configuration of bits that meets this condition is $b=1^{s}0^{n-s}$, which is the state $\min(G_{\ket{b_0}})$. Therefore, if Alg.~\ref{alg:greedypath} halts, it must have found the minimum binary encoded state for that symmetry-protected subspace. \qedsymbol{}

\subsection{Max depth of search in $\mathbb{L}_{\text{SWAP}}$}\label{appendix:pathlengthcalcalc}
 
Here we calculate the worst-case search depth for Alg.~\ref{alg:greedypath} when using string edit maps $\mathbb{L}_{\text{SWAP}}$. Each time a new state $\ket{b_{j+1}}$ is traversed to with $\ket{b_{j+1}} = \min(T^1_{\ket{b_j}})$ from Eq.~\eqref{eqn:iterativepathbuild}, a ``1'' bit is swapped with a left-neighboring ``0'' bit, assuming bits are least-to-most significant left-to-right. We count each of these swaps as 1 step in the search, and this subsection will show an analytical form for the \textit{most} steps ever required for an arbitrary bitstring $\ket{b_0}$ of length $n$ to reach the stop condition $\ket{1^s0^{n-s}}$, as defined at the beginning of Appendix~\ref{appendix:proofofswap}. That section also shows that for the SWAP-isomorphic model we always get $\chi(\ket{b_0}, 1) = \ket{1^s0^{n-s}}$. Note that, because $\mu=1$ is used, the path length of edges in $D_\mathbb{L}$ between $\ket{b_0}$ and $\ket{1^s0^{n-s}}$ is equal to the search depth, because $|Path| = \mu\times|Depth| = |Depth|$.

Suppose we have a symmetry-protected subspace of $s$ many ``1'' bits and $n-s$ many ``0'' bits. The target state $\min(G_{\ket{b_0}})$ for a search with $\chi(\ket{b_0},1)$ is $\min(G_{\ket{b_0}}) = \ket{1^s0^{n-s}}$. The longest path will start with $\ket{b_0} = \ket{0^{n-s}1^s}$, since every other state is fewer SWAPs away from $\min(G_{\ket{b_0}})$. 

Each $\ket{1}$ state in $\ket{b_0}$ requires $n-s$ swaps to get to its position in $\ket{1^s0^{n-s}}$; therefore, in a system of $s$ ``$\ket{1}$'' states, the search depth obeys the summation
\begin{equation}
|Depth| = \sum_{i=1}^s n-s,
\end{equation}
which simplifies to 
\begin{equation}
|Depth| = s(n-s).
\end{equation}
This quantity is maximized at $s=n/2$, which results in $\max(|Depth|) = \frac{n^2}{4}$.

\appsection{Simulation Information}
Our simulations all follow the same template: we take an initial state $\ket{\psi_0}$, which is a $Z$ basis vector prepared by single-qubit Pauli-$X$ gates. In our ideal simulations, we follow this with $p\in [0, 29]$ gate layers of gates to encode the dynamical system, followed by a parallel-readout of all qubits in the $Z$ basis with $M$ measurements. This is shown in the circuit diagram below, where $\ket{\psi_0}$ is a computational basis state prepared by a set of $X$ gates, each $U(\tau_j)$ is a single layer of the Trotterized circuit, at layer $j$, with parameters $\tau_j$. Once the system has been simulated up to $t$-layers, we measure $M$ times, and repeat for each $p \in [0,29]$: \\

\includegraphics[width=0.8\linewidth]{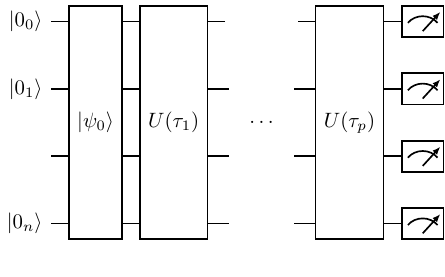}.

For a simulation with noise, we also added an ``error layer'' $E$ after each $U(\tau_j)$, such that $U\rightarrow UE$. It is detailed below in Appendix~\ref{appendix:simnoisemodel}.

\subsection{Simulation noise model}\label{appendix:simnoisemodel}
We use symmetric single-qubit depolarizing noise with probability $\epsilon/3$ after each Trotter layer in the circuit. The circuit diagram for this process at an arbitrary Trotter step $j$ is \\

\includegraphics[width=0.4\linewidth]{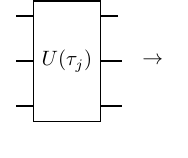}\\
\includegraphics[width=0.8\linewidth]{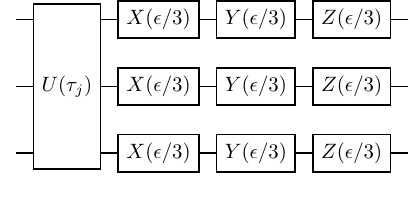},

\noindent where a gate $X(\epsilon/3)$ is an $X$ gate applied with probability $\epsilon/3$ and identity $\mathds{1}$ applied with probability $1-\epsilon/3$. In practice, we simulate this with density matrices through the following steps: if we have the wavefunction's density matrix $\rho$ from the last application of the Trotterized circuit layer $U(\tau_j)$, we apply the symmetric depolarizing noise with the equation
\begin{equation}\label{eqn:depolarizing_noise}
\rho \rightarrow (1-\epsilon)\rho + \frac{\epsilon}{3}X_i\rho X_i + \frac{\epsilon}{3}Y_i\rho Y_i + \frac{\epsilon}{3}Z_i\rho Z_i
\end{equation} on each qubit $i$. As stated in Sec.~\ref{sec:psdemo}, we use $\epsilon/3=0.005, 0.01, 0.02, 0.05$.

\subsection{Heisenberg-XXX}
Here we give the simulation details for the Heisenberg-XXX model, with Hamiltonian given in Sec.~\ref{subsubsec:heisenbergXXX} and simulation results in Sec.~\ref{subsec:exactps}.

\subsubsection{Circuit implementation} 
The $k$-local unitary used ($k=2$) is $U_{i,i+1} = \text{iSWAP}_{i,i+1}(\theta) Z_iZ_{i+1}(\theta)$, or \\

\includegraphics[width=.4\linewidth]{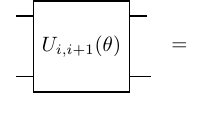}\\
\includegraphics[width=.9\linewidth]{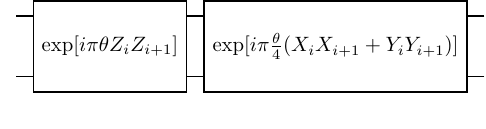},

\noindent where $i$ is the complex coefficient when multiplied and the qubit when indexed, using $\theta=0.1$. We then do an even- and odd-layered Trotterization for one discrete layer of the simulation. This Trotterization on $n=5$ is \\

\includegraphics[width=0.8\linewidth]{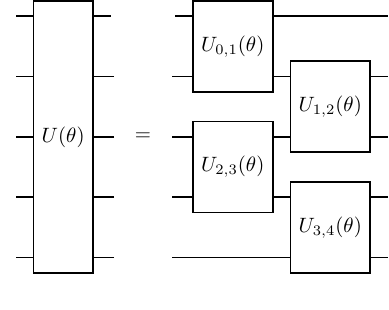}

\noindent for a single discrete time-step. We use the initial condition $\ket{\psi_0} = \ket{101010101010101}$.

\subsubsection{String edit maps}
The string edit maps $\mathbb{L}_{\text{HeisXXX}}$ for this simulation are SWAP-isomorphic, with each local string edit map $\mathcal{L}_{\text{HeisXXX}_{i,i+1}}$ having the behavior:
\begin{equation}\label{eq:allheisxxxmaps}
\begin{split}
\mathcal{L}_{\text{HeisXXX}_{i,i+1}}(\ket{00})&= \{\ket{00}\}\\
\mathcal{L}_{\text{HeisXXX}_{i,i+1}}(\ket{01})&= \{\ket{01},\ket{10}\}\\
\mathcal{L}_{\text{HeisXXX}_{i,i+1}}(\ket{10})&= \{\ket{01},\ket{10}\}\\
\mathcal{L}_{\text{HeisXXX}_{i,i+1}}(\ket{11})&= \{\ket{11}\}\\
\end{split}
\end{equation}
Each string edit (row in Eq.~\eqref{eq:allheisxxxmaps}) obviously conserves $S = \sum_{j=i}^{i+1} Z_j$.

\subsection{$T_6$ quantum cellular automata}
Here we give the simulation details of the $T_6$ quantum cellular automata, with model details in Sec.~\ref{subsubsec:t6qca} and simulation results in Sec.~\ref{subsec:exactps}.

The $k$-local unitary used, ($k=3$), is $U_{i} = P^{(1)}_{i-1}H_iP^{(0)}_{i+1} + P^{(0)}_{i-1}H_iP^{(1)}_{i+1} + P^{(1)}_{i-1}\mathds{1}_iP^{(1)}_{i+1} + P^{(0)}_{i-1}\mathds{1}_iP^{(0)}_{i+1} = \sum_{\alpha,\beta=0}^1 P_{i-1}^{(\alpha)} \otimes (H_i)^{\delta_{\alpha+\beta,1}} \otimes P_{i+1}^{(\beta)}$, where $H_i$ is the Hadamard gate. This unitary is implemented with the circuit\\


\includegraphics[width=0.55\linewidth]{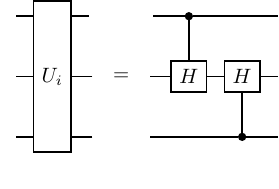}.

\noindent We implement this as a global unitary, which is applied once at each time-step in the simulation by doing a parallel layer of all gates $U_{i}$ where $i$ is even, then a layer of all gates where $i$ is odd. 


\includegraphics[width=0.75\linewidth]{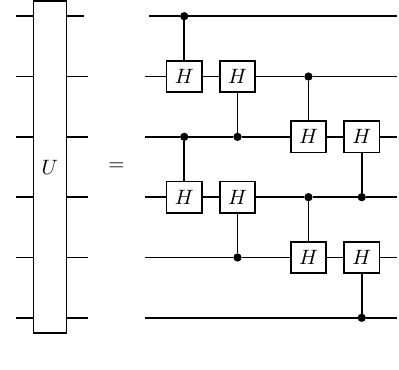}.

\noindent We only use the initial condition $\ket{\psi_0}=\ket{000000010000000}$

\subsubsection{String edit maps}
The string edit maps $\mathbb{L}_{T_6}$ conserve the domain-wall, and each local string edit map $\mathcal{L}_{T_6{i-1,i,i+1}}$ has the behavior
\begin{equation}\label{eq:allt6maps}
\begin{split}
\mathcal{L}_{T_6{i-1,i,i+1}}(\ket{000})&= \{\ket{00}\}\\
\mathcal{L}_{T_6{i-1,i,i+1}}(\ket{001})&= \{\ket{001},\ket{011}\}\\
\mathcal{L}_{T_6{i-1,i,i+1}}(\ket{010})&= \{\ket{010}\}\\
\mathcal{L}_{T_6{i-1,i,i+1}}(\ket{011})&= \{\ket{001},\ket{011}\}\\
\mathcal{L}_{T_6{i-1,i,i+1}}(\ket{100})&= \{\ket{100},\ket{110}\}\\
\mathcal{L}_{T_6{i-1,i,i+1}}(\ket{101})&= \{\ket{101}\}\\
\mathcal{L}_{T_6{i-1,i,i+1}}(\ket{110})&= \{\ket{100},\ket{110}\}\\
\mathcal{L}_{T_6{i-1,i,i+1}}(\ket{111})&= \{\ket{111}\}\\
\end{split}
\end{equation}
which conserves the simulation operator $S=\sum_{j=i-2}^{i+1} Z_jZ_{j+1}$.

\subsection{$F_4$ quantum cellular automata}
Here we give the simulation details of the $F_6$ quantum cellular automata, with model details in Sec.~\ref{subsubsec:f4qca} and simulation results in Sec.~\ref{subsec:exactps} and Sec.~\ref{subsec:probps}. 

The $k$-local unitary used ($k=5$) is built on the local unitary
\begin{equation}
        U_i = \sum_{\alpha, \beta, \gamma, \omega = 0}^1 P_{i-2}^\alpha P_{i-1}^\beta (H_i)^{\delta_{\alpha+\beta+\gamma+\omega,2}} P_{i+1}^\gamma P_{i+2}^{\omega},
\end{equation}
which we implement through quantum gates with the circuit\\


\includegraphics[width=0.95\linewidth]{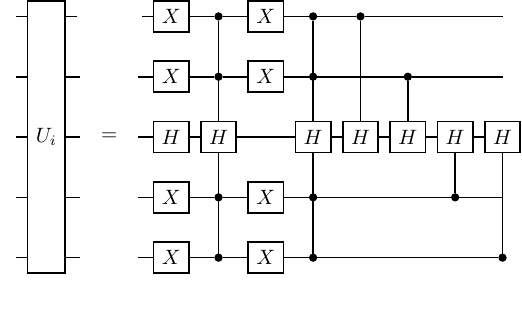}.

\noindent This local circuit is combined into a unitary applied at even discrete time-steps
\begin{equation}\label{eqn:appendixlocalf4even}
    U(\tau_{j=\text{even}}) = \prod_{i = 2, 5, 8, \ldots} U_i(\tau_j) \prod_{i=3,6,9,\ldots} U_i(\tau_j) \prod_{i=4,7,10,\ldots} U_i(\tau_j),
\end{equation}
and a unitary applied at odd discrete time-steps
\begin{equation}\label{eqn:appendixlocalf4odd}
    U(\tau_{j=\text{odd}}) = \prod_{i = 3,6,9, \ldots} U_i(\tau_j) \prod_{i=2,5,8\ldots} U_i(\tau_j) \prod_{i=4,7,10,\ldots} U_i(\tau_j).
\end{equation}
These unitaries on, for example 7 qubits, have the circuit diagrams\\


\includegraphics[width=0.7\linewidth]{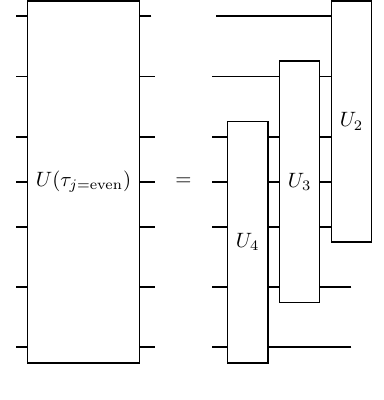}

\noindent and\\


\includegraphics[width=0.7\linewidth]{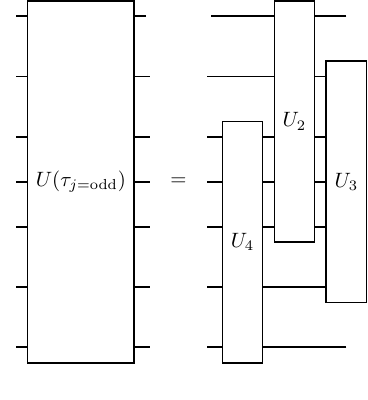}.

\noindent Notice that gates in the same product in Eq.~\eqref{eqn:appendixlocalf4even} and~\eqref{eqn:appendixlocalf4odd}, for example, $U_4$ and $U_7$, commute because only their controls overlap. To get to, for example $t=4$, our circuit would look like\\


\includegraphics[width=0.9\linewidth]{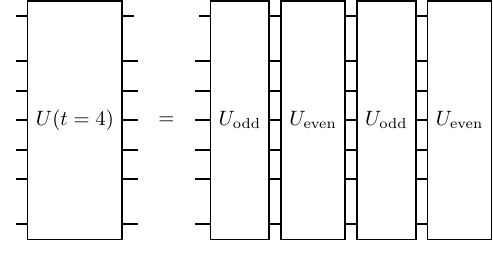}

\subsubsection{String edit maps}
Previously, no symmetry was known for the $F_4$ QCA. Below, we outline the non-identity behavior of each local string edit map $\mathcal{L}_{F_4{i}}$:
\begin{equation}\label{eq:allf4maps}
\begin{split}
\mathcal{L}_{F_4{i}}(\ket{00011})&= \{\ket{00011}, \ket{00111}\}\\
\mathcal{L}_{F_4{i}}(\ket{00111})&= \{\ket{00011}, \ket{00111}\}\\
\mathcal{L}_{F_4{i}}(\ket{01001})&= \{\ket{01001}, \ket{01101}\}\\
\mathcal{L}_{F_4{i}}(\ket{01101})&= \{\ket{01001}, \ket{01101}\}\\
\mathcal{L}_{F_4{i}}(\ket{01010})&= \{\ket{01010},\ket{01110}\}\\
\mathcal{L}_{F_4{i}}(\ket{01110})&= \{\ket{01010},\ket{01110}\}\\
\mathcal{L}_{F_4{i}}(\ket{11000})&= \{\ket{11000},\ket{11100}\}\\
\mathcal{L}_{F_4{i}}(\ket{11100})&= \{\ket{11000},\ket{11100}\}\\
\mathcal{L}_{F_4{i}}(\ket{10010})&= \{\ket{10010},\ket{10110}\}\\
\mathcal{L}_{F_4{i}}(\ket{10110})&= \{\ket{10010},\ket{10110}\}\\
\mathcal{L}_{F_4{i}}(\ket{10001})&= \{\ket{10001},\ket{10101}\}\\
\mathcal{L}_{F_4{i}}(\ket{10101})&= \{\ket{10001},\ket{10101}\}.\\
\end{split}
\end{equation}
Any state $\ket{b}$ which does \textit{not} appear above will have the behavior $\mathcal{L}_{F_4i}(\ket{b}) = \{\ket{b}\}$. It should also be understood that $i$ is the middle qubit in the ordering above. 

\subsubsection{Initial conditions for simulations}\label{appendix:f4_sim_initconds}

We use a variety of initial conditions to test the imperfect path finding of Alg.~\ref{alg:greedypath}. Our initial condition $\ket{\psi_0} = \ket{000000101000000}$, tested in Sec.~\ref{subsec:exactps}, has no false rejections (in other words perfect path finding) with $\mu = 2$.

We also gather the set of symmetry-protected subspaces which have failed paths at $\mu=9$, their average, best, and worst fidelities are shown in Sec.~\ref{subsec:probps}. We chose an arbitrary state from each of these subspaces to be the initial condition; the states are $\ket{\psi_0} \in \{\ket{001000011111111}, \ket{011100000001000}, \\ 
\ket{011110101011111}, \ket{011110000010001}, \ket{011101000001000},\\
\ket{011000000101010}, \ket{001000011111111}, \ket{011100000001000},\\
\ket{011110101011111}, \ket{011110000010001}, \ket{011101000001000},\\
\ket{011000000101010}, \ket{111111101111100}, \ket{110111111110101},\\
\ket{111111101010111}, \ket{111100010111110}, \ket{110111111010001},\\
\ket{110101000000100}
\}$.


\bibliography{cite}

\end{document}